\tikzset{
    -Latex,auto,node distance =1 cm and 1 cm,semithick,
    state/.style ={ellipse, draw, minimum width = 0.7 cm},
    point/.style = {circle, draw, inner sep=0.04cm,fill,node contents={}},
    bidirected/.style={Latex-Latex,dashed},
    el/.style = {inner sep=2pt, align=left, sloped}
}
\newtheorem{theorem}{Theorem}
\newtheorem{proposition}{Proposition}
\newtheorem{assumption}{Assumption}
\theoremstyle{definition}
\newtheorem{remark}{Remark}
\begin{document}

\title{Seller-Side Experiments under Interference Induced by Feedback Loops in Two-Sided Platforms}
\author{Zhihua Zhu \\ Tencent \and Zheng Cai \\ Tencent \and Liang Zheng \\ Tencent \and Nian Si\thanks{Corresponding author: niansi@chicagobooth.edu.} \\ Booth School of Business, \\ University of Chicago }
\date{\today }
\maketitle

\begin{abstract}
Two-sided platforms are central to modern commerce and content sharing and often utilize A/B testing for developing new features.  While user-side experiments are common, seller-side experiments become crucial for specific interventions and metrics. This paper investigates the effects of interference caused by feedback loops on seller-side experiments in two-sided platforms, with a particular focus on the counterfactual interleaving design, proposed in \citet{ha2020counterfactual,nandy2021b}. These feedback loops, often generated by pacing algorithms, cause outcomes from earlier sessions to influence subsequent ones.   This paper contributes by creating a mathematical framework to analyze this interference, theoretically estimating its impact, and conducting empirical evaluations of the counterfactual interleaving design in real-world scenarios. Our research shows that feedback loops can result in misleading conclusions about the  treatment effects.
\end{abstract}

\section{Introduction}

Two-sided platforms have increasingly integrated into our daily routines. We utilize shopping platforms like Amazon and Taobao to purchase and sell goods. Video-sharing platforms such as TikTok and Kuaishou allow us to view and upload content. Moreover, platforms like Booking.com and Airbnb have simplified the process of renting out or booking accommodations. A common workflow across these platforms involves users (typically buyers) initiating a request (session), which the platform then processes to match them with a ranked list of sellers or providers.

To ensure optimal user experience, these platforms routinely conduct experiments (A/B tests) before implementing new features. While user-side experiments are predominant, seller-side experiments, also known as supply-side, producer-side, ad-side, creator-side, or video-side experiments in various contexts, are also necessary when user-side experiments are either infeasible or inappropriate.  For instance, certain interventions, such as updating a seller's user interface, can only be applied to sellers. Furthermore, when the objective is to measure metrics like seller retention, seller-side experiments are more suitable.

In naive seller-side experiments, interference often presents more intensely. Interference means that the treatment assignment of some units could affect the outcomes of others \citep{imbens2015causal}.  
 To illustrate this challenge, let consider an advertisement recommendation system in a video-sharing platform. Imagine we're evaluating a new algorithm designed to boost new ads, a new ``cold start'' strategy. In a naive seller-side experiment, let's say we boost 50\% of new ads, which consists of the treatment group, leaving the other half untouched. Due to this boost, ads within the treatment group naturally achieve a higher ranking. Yet, if we were to boost all new ads, the ones in our initial treatment group would actually descend in rank because of the increased volume of videos receiving the same boost. As a result, the data from the experiment could overstate the true impact.

To address this issue, \citet{ha2020counterfactual} and \citet{nandy2021b} propose a counterfactual interleaving design and \citet{Wang-ba-Producer2023} enhance the design with a novel tie-breaking rule to guarantee consistency and monotonicity. In this approach,  a subset of ads is randomly divided into control ads and treatment ads. At the same time, during the ranking phase, both the control strategy and treatment strategy are applied to rank all ads. These two ranking strategies can be referred to as Ranking C and Ranking T, respectively. The results of these two rankings are then merged to produce a final order, M. The merging process uses the order of control ads in Ranking C as their order in M, while the order of treatment ads in Ranking T determines their order in M. If there's a position conflict, it's resolved randomly: one ad retains its spot, while the other is shifted down a slot. Consequently, the placement of ads in the treatment group in the final ranking approximates their rank when all ads are sorted using the treatment strategy. Similarly, the placement of control group ads is nearly identical to their rank under the control strategy.

Following its introduction, this methodology was extensively implemented across major online platforms, such as Facebook, TikTok, and Kuaishou. Although the method appeared promising initially, our implementation and analysis on our platform  revealed substantial interference, particularly in settings with feedback loops. 

In contemporary recommendation systems, rankings in subsequent sessions can be influenced by the rankings of earlier sessions, primarily due to feedback loops. Take advertising recommendations as an example. Each ad campaign has a preset budget, which outlines a specified amount of money to be spent within a given time frame. To adhere to these budget limits, platforms often employs budget control mechanisms \citep{karlsson2020feedback}: they raise the bidding price when the remaining budget is high, and lower it when the budget is nearing depletion. Consequently, if an ad is shown more often in initial sessions, it may be recommended less in later sessions to balance spending, and the opposite is also true. This feedback loop creates the interference.

Feedback loops are not exclusive to the context of advertising. In e-commerce, sales rates must be regulated to align with the inventory level, while on video-sharing platforms, a cold start algorithm for new videos must carefully manage the rate of exposure to balance between discovering new content and promoting popular content. To encompass these various situations, we refer to these feedback control algorithms as "pacing algorithms." A typical workflow for platforms that incorporate a pacing algorithm is depicted in Figure \ref{fig:scheme}.
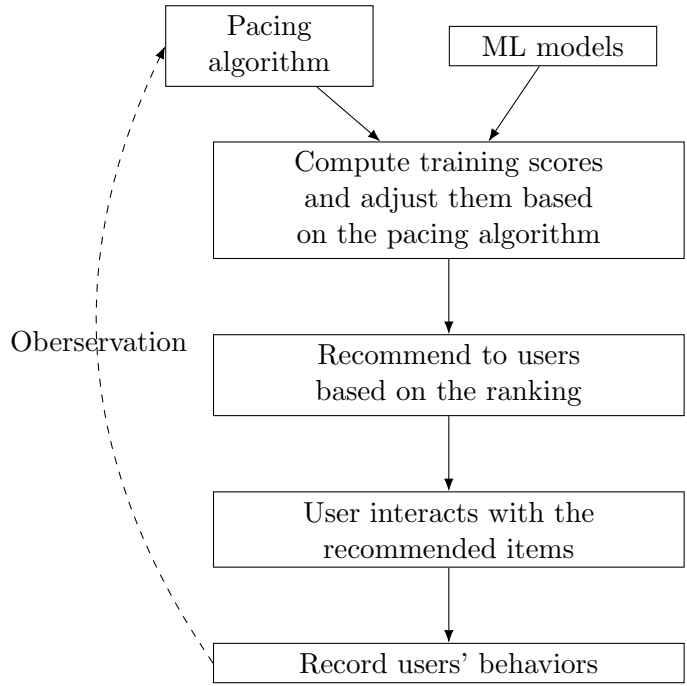
\begin{figure}[ht]
\centering
\begin{tikzpicture}[
		recstate/.style={rectangle, draw, text width=6cm, minimum width=6cm, align=center},recstate2/.style={rectangle, draw, text width=2.5cm, minimum width=2.5cm, align=center}]
		\node[recstate2] (1) {ML models};
  \node[recstate2] (0)  [left = of 1]  {Pacing algorithm};
		\node[recstate] (2) [below= of 1.south west] {Compute training scores and adjust them based on the pacing algorithm};
		\node[recstate] (3) [below= of 2] {Recommend to users based on the ranking};
		\node[recstate] (4) [below =of 3] {User interacts with the recommended items};
		\node[recstate] (5) [below =of 4] {Record users' behaviors};
  \path (0) edge (2);
		\path (1) edge (2);
		\path (2) edge  (3);
		\path (3) edge (4);
		\path (4) edge (5);
		\draw [dashed] (5.west) to[bend left=30] node[midway, above]{Oberservation} (0.west);
	\end{tikzpicture}
\caption{A standard pipeline in platforms with a pacing algorithm.}
\label{fig:scheme}
\end{figure}

Although pacing algorithms are essential in platform operations, they create significant interference that can affect the outcomes of seller-side experiments, including both the naive one and the  counterfactual interleaving designs. In this paper, we will clarify these issues using mathematical models and support our explanations with empirical evidence from the real world. Here, we outline our main contributions:
\begin{enumerate}
\item We develop a framework to represent seller-side experiments that are influenced by interference from feedback loops. We specifically consider the naive design and  the counterfactual  interleaving  design when evaluating different types of features.
\item Using this framework, we illustrate the presence of interference and provide a theoretical estimate of how it biases results.
\item We evaluate the counterfactual interleaving design in a real-world setting affected by feedback loops. Our analysis reveals that feedback loops can lead to incorrect conclusions about the effect of a treatment. This understanding also helps us introduce a straightforward method for detecting such interference in practice.
\end{enumerate}

The remainder of this paper is structured as follows: Section \ref{sec:literature} reviews relevant research on A/B tests under interference, as well as the effects of feedback loops on platform operations. Section \ref{sec:seller-side} provides an overview of the two types of seller-side experiments examined in this paper. In Section \ref{sec:contamination}, we develop mathematical models to investigate the impact of feedback loops on these experiments. Section \ref{sec:numerical} strengthens our analysis with real-world A/B testing data. Finally, we conclude this paper with future work in Section \ref{sec:conclusion}. All the proofs are included in Appendix \ref{append:proof}.

\section{Related Literature}
\label{sec:literature}
\subsection{Interference in Experiments}
The presence of interference is a well-documented phenomenon in academic research. \citet{blake2014marketplace,holtz2020reducing,fradkin2015search} showed empirical evidence that  the bias introduced by interference can be as significant as the treatment effect itself. 

Many researchers have analyzed interference and proposed new experimental designs in two-sided platforms. \citet{johari2022experimental} and \citet{bajari2021multiple}
 introduced two-sided randomizations, also referred to as multiple randomization designs.  By blocking the treatment-control interactions, \citet{ye2023cold} proposed a similar yet different two-sided split design. Additionally, \citet{bright2022reducing} modelled the two-sided marketplaces using a linear programming matching mechanism and developed debiased estimators through shadow prices.
  The concept of bipartite experiments, where treatments are assigned to one group of units and metrics are measured in another, was presented in works by
\citet{eckles2017design,pouget2019variance,harshaw2023design}. Furthermore, the application of cluster experiments in marketplaces was demonstrated in studies by \citet{holtz2020reducing,holtz2020limiting}.

For seller-side experiments, \citet{ha2020counterfactual} and %
\citet{nandy2021b} introduced a counterfactual interleaving design widely
implemented in the industry and \citet{Wang-ba-Producer2023} enhanced the design with a novel tie-breaking rule, as mentioned earlier. 

Regarding advertising experiments, %
\citet{liu2021trustworthy} proposed a budget-split design and \citet{si2022optimal} employed a weighted local linear regression estimation to address imbalances in budget allocation between treatment and control groups.

For interference induced by feedback loops,  \citet{si2023tackling} studied the specific data training loop and proposed a weighted training approach, where \citet{holtz2023study} also studied a similar problem, which they refer to as ``Symbiosis Bias."   
Additionally, \citet{goli2023bias}  developed a bias-correction technique using data from past A/B tests to tackle such interference.  For evaluating bandit learning algorithms, \citet{guo2023evaluating} suggested a two-stage experimental design to assess the treatment effects' lower and upper bounds. Additionally, in the context of search
ranking systems, \citet{musgrave2023measuring}  recommended query-randomized experiments to reduce feature spillover effects.

There are also other types of interference studied in the literature, including Markovian interference  \citep{farias2022markovian,farias2023correcting,glynn2020adaptive,hu2022switchback,li2023experimenting}, temporal interference \citep{bojinov2023design,hu2022switchback,xiong2023data,xiong2023bias,basse2023minimax,xiong2019optimal,ni2023design}, and network interference \citep{hudgens2008toward, gui2015network,li2022random,aronow2017estimating,candogan2023correlated,ugander2013graph,ugander2023randomized,yu2022estimating}.

\subsection{Feedback Loops in Recommendation Systems}
The presence of feedback loops in recommendation systems has been well-documented in the literature for a long time.   \citet{pan2021correcting} discussed the concept of user feedback loops and strategies for mitigating their biases. \citet{jadidinejad2020using} explored the impact of feedback loops on the underlying models. Additionally, \citet{yang2023rectifying} and \citet{khenissi2022modeling} identified fairness issues arising from these loops.
Furthermore, 
 \citet{chaney2018algorithmic}, \citet{mansoury2020feedback}, and \citet{krauth2022breaking} studied how feedback loops amplify homogeneity and popularity biases.  In our work, we specifically focus on the impact of  feedback loops on seller-side experiments.
\section{Seller-Side Experiments}
In the section, we brief talk about the platform pipelines and the implementation of seller-side experiments.

\textbf{Naive seller-side experiments.} In a typical naive seller-side experiment, sellers (such as advertisements, creators, or videos) are randomly assigned to either a treatment group or a control group. The treatment group sellers are equipped with a new feature, while the control group sellers continue using the existing production feature. This new feature could be a user interface modification for the seller or a different campaign budget control algorithm. When a user request is received, a score is calculated for each seller, which may be adjusted by the feedback control algorithms. If the treatment and control groups use different feedback control algorithms, the score adjustments will vary between them. Subsequently, the scores from both groups are combined, and the sellers with the highest scores are recommended to the user. The metrics for users or sellers are then recorded. At the end of the experiment, the treatment effect is calculated by comparing the average metrics of the treatment group with those of the control group.

\textbf{Counterfactual interleaving design.} In the counterfactual interleaving design \citep{ha2020counterfactual,nandy2021b}, sellers are randomly distributed into three groups: treatment, control, and the other group. The inclusion of the other group is essential for minimizing conflicts between treatment and control rankings. This design is distinct from the naive seller-side experiments in terms of how it computes scores and ranks sellers. Both control and treatment strategies are used to calculate seller scores. These scores generate two separate ranking lists, termed Ranking C (control) and Ranking T (treatment). A combined ranking, Ranking M, is then created, positioning control group sellers according to Ranking C and treatment group sellers according to Ranking T. 
\citet{ha2020counterfactual} indicates that conflicts in seller positions are rare when the other group is large, making any tie-breaking rule effective. In contrast, when treatment and control groups are larger, selecting an appropriate tie-breaking rule becomes crucial, as discussed in \citet{nandy2021b,Wang-ba-Producer2023}. After assigning positions to treatment and control sellers, those in the other group fill the remaining slots. This process is illustrated in Figure \ref{fig:counter-interleave}.

\label{sec:seller-side}
\begin{figure}[ht]
    \centering
    \includegraphics[width=9cm]{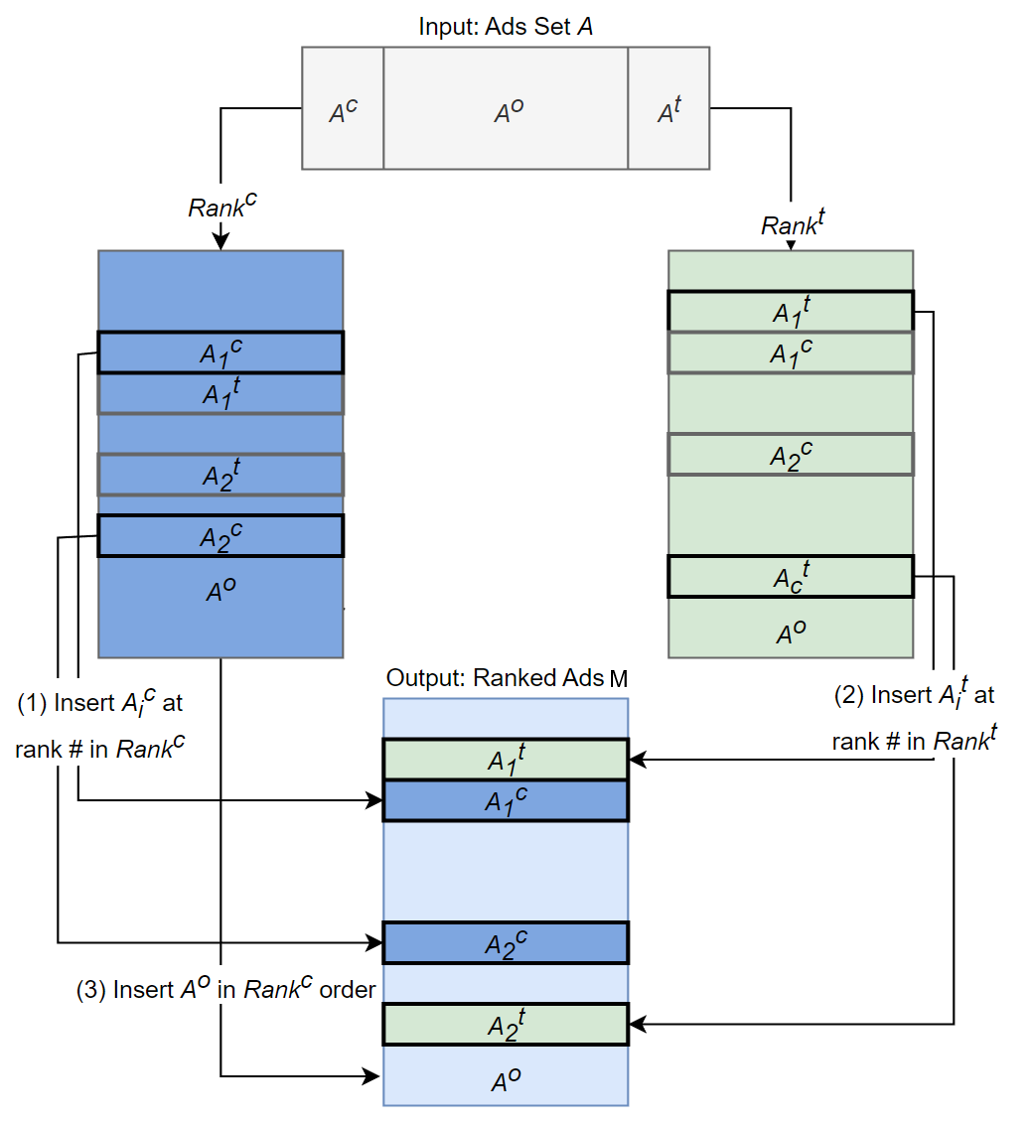}
    \caption{Counterfactual interleave design.}
    \label{fig:counter-interleave}
\end{figure}

\section{A Model for seller-side experiments under Interference Induced by Feedback Loops}
\label{sec:contamination}

When the treatment and control algorithms differ and produce different user-interaction dynamics,  the seller-side experiments would be unreliable and significantly biased by the presence of feedback loops. In this section, we develop a foundational analytical model to analyze seller-side experiments under interference induced by feedback loops.  In Section \ref{subsec:pipeline}, we describe the recommendation system pipeline and the assumptions we make, excluding the  experiment design. Following this, in Section \ref{subsec:without}, we examine seller-side experiments that do not include feedback loops. In contrast, in Section \ref{subsec:with}, we focus on seller-side experiments that incorporate feedback loops.

\subsection{Recommendation System Pipeline}
\label{subsec:pipeline}
We focus on $N$ sellers, denoted by $i=1,2,\ldots ,N.$ The duration of our experiment is $H$. User requests are received continuously over the interval $[0,H]$. The ranking scores of seller $i$ at time $t$, both before and after adjustments made by feedback control algorithms, are represented by $\hat{e}_{i}(t)$ and $r_{i}(t)$, respectively. We use $I_{i}(t)$ as an indicator to signify whether seller $i$ is recommended for a request at time $t$. The true metric for seller $i$ at time $t$, assuming they are recommended, is denoted by $e_{i}(t)$. We define $O_{i}(t)=I_{i}(t)e_{i}(t)$ to represent the observed metrics at time $t$ for seller $i$. 

$S_{i}(t)$ is designated as a function to represent a one-dimensional state process, subject to the influence of feedback loops. In the advertising context, for example, $S_{i}(t)$ could indicate the budget consumption at a specific time $t$. In e-commerce, it might refer to the volume of sales at time $t$. Similarly, in a cold start scenario, $S_{i}(t)$ represents the current number of exposures a video has received. Then, based on the notations defined above, we able to model the dynamics of a recommendation system's pipeline.
\begin{eqnarray}
O_{i}(t) &=&I_{i}(t)e_{i}(t), \\
I_{i}(t) &=&\mathbb{I}\left\{ f_i(\{r_{1}(t),\ldots ,r_{N}(t)\},%
\underline{R}(t))\geq0\right\} , \\
r_{i}(t) &=&\Psi (S_{i}(t),\hat{e}_{i}(t)), \\
\frac{dS_{i}(t)}{dt} &=&\Gamma \left(S_{i}(t), e_{i}(t),r_{i}(t),I_{i}(t),t\right) \label{eq:S},
\end{eqnarray}%
where $\underline{R}(t)$ is some reserve utility for a request at time $t.$
In environments where ads and non-ad content, such as videos, are ranked together, such as on video-sharing platforms, $\underline{R}(t)$ could signify the anticipated utility of recommending a non-advertisement video. 

To avoid the technique issues, the differential equation (\ref{eq:S}) should be understood as the following integral equation:
$$S_{i}(t) =S_{i}(0) + \int_0^t \Gamma \left( S_i(t),e_{i}(s),r_{i}(s),I_{i}(s),s\right)ds.$$
We make the following monotonicity assumptions real-world observations and behaviors. 
\begin{assumption} We assume
    \begin{enumerate}
        \item 
$f_i(\{r_{1},\ldots ,r_{N}\},\underline{R})$   is non-increasing with respect to $%
r_{1},\ldots,r_{i-1},$ $r_{i+1}, \ldots ,r_{N}$   and  $\sum_{i\in\mathcal{I}} f_i(\{r_{1},\ldots ,r_{N}\},\underline{R})$ is non-decreasing with respect to $r_i$, provided that $i\in\mathcal{I}$. 
 \item 
$\Psi (s,\hat{e})$ is non-increasing with respect to $s$.
 \item 
$\Gamma \left( \cdot,\cdot ,\cdot ,\cdot ,\cdot \right) \geq 0$.
\item $\Psi (s,\hat{e})$ is non-decreasing with respect to $\hat{e}$.
    \end{enumerate}
    \label{assump:monotonicity}
\end{assumption}
    Assumption \ref{assump:monotonicity} is clarified as follows: Assumption \ref{assump:monotonicity}.(1) corresponds to the fact that the recommendation system select the highest scores. For example, if one item is recommended, then the system might use the formula 
\begin{equation}f_i(\{r_{1},\ldots ,r_{N}\},\underline{R}) = r_i-\max(r_{1},\ldots,r_{i-1},r_{i+1}, \ldots,r_{N},\underline{R}). \label{eq:formula:max} \end{equation}
    This formula selects the one highest value among the scores.
    Assumptions \ref{assump:monotonicity}.(2) and (3) are based on the fact that $S_i(\cdot)$ represents current consumptions. Consequently, $S_i(\cdot)$ is inherently non-decreasing. Furthermore, pacing algorithms typically exhibit a damping effect; they reduce the ranking of an item when its current consumption is high, and conversely increase it when consumption is low. Finally,  Assumption \ref{assump:monotonicity}.(4)   implies that a higher estimated score prior to adjustment leads to a higher ranking score. 

We plot the causal graph \citep{pearl2000models} in Figure \ref%
{figure:causal} to illustrate the dependence in the feedback loops. 
    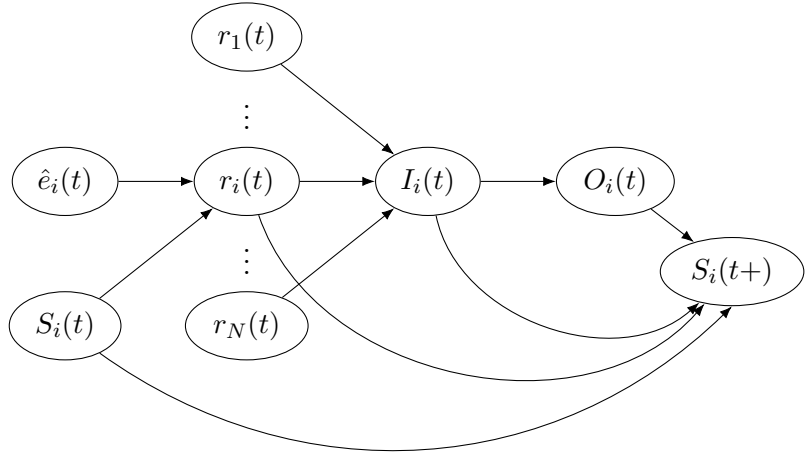
\begin{figure}[ht]
	\centering
	\begin{tikzpicture}
		\node[state] (1) {$\hat{e}_i(t)$};
            
		\node[state] (2) [right =of 1] {$r_i(t)$};
            \node[state] (21) [above =of 2] {$r_{1}(t)$};
            \node[state] (22) [below =of 2] {$r_{N}(t)$};
            
		\node[state] (3) [right =of 2] {$I_i(t)$};
            \node[state] (3') [right =of 3] {$O_i(t)$};
		\node[state] (4) [right =of 3',yshift=-1.2cm,xshift=-1.2cm] {$S_i(t+)$};
		\node[state] (5) [below =of 1,] {$S_i(t)$};

            \node (dot) at ($(2)!.5!(21)$) {$\vdots$};
            \node (dot2) at ($(2)!.5!(22)$) {$\vdots$};
		\path (1) edge   (2);
            \path (21) edge   (3);
            \path (22) edge   (3);
		\path (2) edge   (3);
            \path (3) edge   (3');
		\path (3') edge   (4);
		\path (5) edge   (2);
  \path 	(3)   edge[bend right=60]    (4);
		\path 	(2)   edge[bend right=60]    (4);
		\path (5) edge[bend right=40]   (4.south);
	\end{tikzpicture}

	\caption{Dependence of different objects in the feedback loops.}
	\label{figure:causal}
\end{figure}

Our goal is to estimate the Global Treatment Effect (GTE), defined as the difference in metrics observed under global treatment and global control conditions. We describe this process as follows: within the global treatment condition, the procedure is outlined as
\begin{eqnarray}
O_{i}^{GT}(t) &=&I_{i}^{GT}(t)e_{i}^{GT}(t), \label{GT:O}\\
I_{i}^{GT}(t) &=&\mathbb{I}\left\{f_i(\{r_{1}^{GT}(t),\ldots
,r_{N}^{GT}(t)\},\underline{R}(t))\geq0\right\} , \label{GT:I} \\
r_{i}^{GT}(t) &=&\Psi^T (S_{i}^{GT}(t),\hat{e}_{i}^{GT}(t)), \label{GT:r} \\
\frac{dS_{i}^{GT}(t)}{dt} &=&\Gamma \left(S_{i}^{GT}(t),
e_{i}^{GT}(t),r_{i}^{GT}(t),I_{i}^{GT}(t),t\right). \label{GT:S}
\end{eqnarray}
Similarly,  within the global control regime, we have:
\begin{eqnarray}
O_{i}^{GC}(t) &=&I_{i}^{GC}(t)e_{i}^{GC}(t),  \label{GC:O}\\
I_{i}^{GC}(t) &=&\mathbb{I}\left\{  f_i(\{r_{1}^{GC}(t),\ldots
,r_{N}^{GC}(t)\},\underline{R}(t))\geq0\right\} , \label{GC:I} \\
r_{i}^{GC}(t) &=&\Psi^C (S_{i}^{GC}(t),\hat{e}_{i}^{GC}(t)), \label{GC:r}\\
\frac{dS_{i}^{GC}(t)}{dt} &=&\Gamma \left(S_{i}^{GC}(t),
e_{i}^{GC}(t),r_{i}^{GC}(t),I_{i}^{GC}(t),t\right) \label{GC:S}.
\end{eqnarray}
Here, we assume $S_{i}^{GC}(0)=S_{i}^{GT}(0)$ and $r_{i}^{GC}(0)=r_{i}^{GT}(0)$, for $i=1,2,\ldots,N$.
The GTE is thereby defined as 
\begin{equation}
	\mathrm{GTE}=\frac{1}{N}\sum_{i=1}^{N}\int_{0}^{H}\left( O_{i}^{GT}(t)-O_{i}^{GC}(t)\right)
dt .
\label{GTE}
\end{equation}
\subsection{Seller-Side Experiments without Feedback Loops}
\label{subsec:without}
We consider an experiment that randomly assigns a seller to the treatment group with probability $p$ and to the control group with probability $1-p$. We use $\mathcal{T}$ and $\mathcal{C}$ to denote the treatment and control groups respectively: $\mathcal{T} \cap \mathcal{C} = \emptyset$ and $\mathcal{T} \cup \mathcal{C} = \{1,2,\ldots,N\}$.  We do not specifically consider the other group since  the other group could be absorbed into the reserve utility $\underline{R}(\cdot)$ and we ignore position conflicts in our model. However, incorporating the other group will not change any results we shall present. 

We will consider two types of treatment features: improving item performance $e_i(t)$, for example, 1) enhancing a seller's content and product descriptions to increase customer purchase likelihood; and 2) testing the ranking algorithm $r_i(t)$, such as promoting certain sellers or making the estimated $\hat{e}_i(t)$ more closely align with the true metrics ${e}_i(t)$.

\subsubsection{Naive seller-side experiments}
We first write down the dynamics for  naive seller-side experiment without interference by feedback loops:
\begin{align}
O_{i}^{E}(t) &=I_{i}^{E}(t)e_{i}^{E}(t), \label{NEwithoutO}\\
I_{i}^{E}(t) &=\mathbb{I}\left\{ f_i^{NE}(t)\geq0\right\}, \label{NEwithoutI}
\\
r_{i}^{E}(t) &=\hat{e}_{i}^{E}(t),
\end{align}%
where $E\in\{C,T\}$ indicates the control or treatment assignment, with $E=T$ for members of the treatment group $i\in\mathcal{T}$ and $E=C$ for 
  those in the control group $i\in\mathcal{C}$, and the threshold function is defined
\begin{equation}
f_i^{NE}(t)= f_i\left(\left\{ r_{i}^{T}(t),i\in \mathcal{T}\right\} \cup \left\{
r_{i}^{C}(t),i\in \mathcal{C}\right\} ,\underline{R}(t)\right)
\label{NEwithoutf}
\end{equation}

Th GTE estimator is defined as 
\begin{equation}
\label{GTE_hat}
    \widehat{GTE} = 	\frac{1}{Np}\sum_{i \in \mathcal{T}} \int_{0}^{H} O_{i}^{T}(t)dt-\frac{1}{N(1-p)}\sum_{i \in \mathcal{C}} \int_{0}^{H} O_{i}^{C}(t)
dt .
\end{equation}
In the dynamics (\ref{NEwithoutO})-(\ref{NEwithoutf}), we simplify the process by excluding the state $S_i(\cdot)$ and directly equate the ranking score $r_i(t)$  to the estimated score $\hat{e}_i(t)$  from the ML models. Additionally, the threshold function $f_i^{NE}(\cdot)$ (where NE stands for \textbf{N}aive seller-side \textbf{E}xperiments)  is designed to simply aggregate the ranking scores from both the treatment and control groups.

 \textbf{Case 1:} Testing item performance $e_i(t)$. In this case we assume $\hat{e}_i^C(t)=\hat{e}_i^T(t)$ for $i=1,2,\ldots, N$ and $t\in[0,H]$. Proposition \ref{prop:NEwithoutcase1} below shows that the naive seller-side experiments are unbiased in this case without feedback loops.

 \begin{proposition}
     \label{prop:NEwithoutcase1}
     Suppose that $\hat{e}_i^C(t)=\hat{e}_i^T(t)$ for $i=1,2,\ldots, T$ and $t\in[0,H]$. Then, we have the GTE estimator (\ref{GTE_hat}) is unbiased of the true GTE (\ref{GTE}), i.e.,
     $
     \mathbb{E}[\widehat{GTE}]=\mathrm{GTE}.
     $
 \end{proposition}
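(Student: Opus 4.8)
The plan is to leverage the defining feature of Case 1 --- that the treatment alters only the true metric $e_i(t)$ while leaving the pre-adjustment scores unchanged, $\hat{e}_i^C(t)=\hat{e}_i^T(t)$ --- to show that the set of recommended sellers is completely insensitive to the treatment--control labeling, after which unbiasedness follows from a routine inverse-probability-weighting computation. Since we are in the no-feedback regime, $r_i^E(t)=\hat{e}_i^E(t)$, so the Case 1 hypothesis gives $r_i^T(t)=r_i^C(t)=:\hat{e}_i(t)$ for every $i$ and $t$. Consequently the collection of scores entering the threshold function in (\ref{NEwithoutf}), namely $\{r_i^T(t):i\in\mathcal{T}\}\cup\{r_i^C(t):i\in\mathcal{C}\}$, equals $\{\hat{e}_1(t),\ldots,\hat{e}_N(t)\}$ regardless of how the items are partitioned into $\mathcal{T}$ and $\mathcal{C}$. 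Hence $f_i^{NE}(t)$, and therefore the indicator $I_i^E(t)=\mathbb{I}\{f_i^{NE}(t)\geq 0\}$, is a fixed deterministic function of $t$ independent of the realized assignment; I denote this common value by $I_i(t)$. Applying the identical reasoning to (\ref{GT:I}) and (\ref{GC:I}), where every seller carries the same estimated score $\hat{e}_i(t)$, yields $I_i^{GT}(t)=I_i^{GC}(t)=I_i(t)$.

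I would then convert this invariance into an identity between the observed experimental outcomes and the global potential outcomes. Because the treatment in Case 1 affects only a seller's own metric, $e_i^{GT}(t)=e_i^T(t)$ and $e_i^{GC}(t)=e_i^C(t)$; combined with the common indicator this gives, for every $i$, $O_i^T(t)=I_i(t)e_i^T(t)=O_i^{GT}(t)$ and $O_i^C(t)=I_i(t)e_i^C(t)=O_i^{GC}(t)$, and, crucially, both right-hand sides are nonrandom with respect to the randomization. Writing $Z_i=\mathbb{I}\{i\in\mathcal{T}\}$ with $\mathbb{E}[Z_i]=p$, the estimator (\ref{GTE_hat}) reads $\frac{1}{Np}\sum_{i=1}^N Z_i\int_0^H O_i^{GT}(t)\,dt-\frac{1}{N(1-p)}\sum_{i=1}^N (1-Z_i)\int_0^H O_i^{GC}(t)\,dt$. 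Taking expectations and using that each integral is a constant, linearity pulls $\mathbb{E}[Z_i]=p$ and $\mathbb{E}[1-Z_i]=1-p$ through the sums and cancels the weights, leaving $\frac{1}{N}\sum_{i=1}^N\int_0^H\left(O_i^{GT}(t)-O_i^{GC}(t)\right)dt=\mathrm{GTE}$ by (\ref{GTE}).

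The substantive step is the first: verifying that the assignment drops out of every indicator, so that the potential outcomes $O_i^{GT}(t)$ and $O_i^{GC}(t)$ are genuinely deterministic. This is precisely where the Case 1 hypothesis $\hat{e}_i^C=\hat{e}_i^T$ is indispensable, as it decouples the ranking --- and hence the recommended set --- from the labeling; without it (for instance in a setting where the treatment changes the ranking) the indicators would depend on the random partition and the clean factorization would fail. Once the decoupling is established, the remaining expectation is the standard Horvitz--Thompson argument, which relies only on linearity and $\mathbb{E}[Z_i]=p$ and therefore holds under either Bernoulli or fixed-margin randomization; I anticipate no further difficulty there.
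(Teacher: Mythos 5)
Your proof is correct and takes essentially the same approach as the paper's: both observe that $r_i^E(t)=\hat{e}_i^E(t)$ together with $\hat{e}_i^T(t)=\hat{e}_i^C(t)$ forces the ranking scores, and hence the indicators $I_i(t)$, to coincide across the experiment and both global regimes, so each $O_i$ is assignment-independent and unbiasedness follows by inverse-probability weighting. The only difference is presentational: you spell out the final Horvitz--Thompson expectation with $Z_i$ explicitly, a step the paper compresses into ``the desired results then follow.''
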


 Proposition \ref{prop:NEwithoutcase1} suggests that if experimenters believe the treatment affects only the true item performance without impacting the rankings, and the effects from feedback loops are negligible, then using the naive design is sufficient.

 \textbf{Case 2:} Testing the ranking algorithm $r_i(t)$, where we assume $\hat{e}_i^C(t)\neq\hat{e}_i^T(t)$ in general. In this case, this naive design is biased because $I_i^T(t) \neq I_i^{GT}(t),I_i^C(t) \neq I_i^{GC}(t)$   generally lead to different items being recommended in the experimental versus global treatment/control regimes. For instance, if we consider boosting specific sellers, which is represented by $\hat{e}_i^T(t)>\hat{e}_i^C(t)$. Then, the treatment sellers would typically be ranked higher, exhibiting a cannibalization effects. then sellers under treatment would generally be ranked higher, leading to cannibalization effects. As a result, the estimator tends to overestimate the true GTE, that is,
 $$ \mathbb{E}[\widehat{GTE}]\geq \mathrm{GTE}.$$
 This phenomenon aligns with the "cold start" examples discussed in the Introduction. However, we will see that the counterfactual interleaving design can address this issue, provided that the system is not influenced by feedback loops.

\subsubsection{Counterfactual interleaving design.}
The counterfactual interleaving design differs from the naive design solely in the dynamics of the recommendation item $I$
I, as defined in the equation (\ref{NEwithoutI}). Here, $I$ follows 
\begin{align}
 I_{i}^{T}(t) &=\mathbb{I}\left\{  f_i^{CE,T}(t)\geq0\right\}, \text{ for } i \in \mathcal{T} \text{ and } \label{CEwithoutIT} \\
I_{i}^{C}(t) &=\mathbb{I}\left\{  f_i^{CE,C}(t)\geq0\right\}, \text{ for } i \in \mathcal{C}    \label{CEwithoutIC}
\end{align}
where 
\begin{align}
    f_i^{CE,T}(t)&=f_i\left(\left\{ r_{i}^{T}(t),i=1,2,\ldots,N \right\}  ,\underline{R}(t)\right) \text{ and } \label{CEwithoutfT} \\
    f_i^{CE,C}(t)&=f_i\left(\left\{ r_{i}^{C}(t),i=1,2,\ldots,N \right\}  ,\underline{R}(t)\right).   \label{CEwithoutfC}
\end{align}
Here, CE stands for \textbf{C}ounterfactual interleaving \textbf{E}xperiments. To compute  $f_i^{CE,T}(t)$ and $f_i^{CE,C}(t)$, it is necessary to maintain two versions of ranking scores for all sellers: one for the treatment group and one for the control group.
\begin{equation}
    r_i^T(t)=\hat{e}_i^T(t) \text{ and } r_i^C(t)=\hat{e}_i^C(t),
    \label{CEwithoutr}
\end{equation}
for all $i=1,2,\ldots, N$ and $t\in[0,H]$.
\begin{remark}
     In our continuous model, we can assume there are no position conflicts almost everywhere. Therefore, In dynamics (\ref{CEwithoutIT}) and (\ref{CEwithoutIC}), it is safe to ignore the conflicts.
\end{remark}
Proposition \ref{prop:CEwithoutunbiased} below demonstrates  that the counterfactual interleaving designs are unbiased for both types of treatment features, updating $e_i(t)$ and $\hat{e}_i(t)$. 
\begin{table*}[!ht]
  \centering
  \caption{Comparing the bias in naive seller-side experiments and counterfactual interleaving design  for different types of treatment features in the absense of feedback loops in the system.}

    \begin{tabular}{lcc}
    \toprule
     & Naive seller-side experiments & Counterfactual interleaving design \\
     \midrule
    Testing item performance & \color{green}unbiased & \color{green}unbiased \\
    Testing the ranking algorithm & \color{red}biased & \color{green}unbiased \\
    \bottomrule
    \end{tabular}%
  \label{tab:withfeedbackloop}%
\end{table*}
\begin{proposition}
\label{prop:CEwithoutunbiased}
    Suppose that the experiments follow the dynamics (\ref{NEwithoutO}), (\ref{CEwithoutIT})-(\ref{CEwithoutr}). And we allow for $e_i^T(t)\neq e_i^C(t)$ and $\hat{e}_i^T(t)\neq \hat{e}_i^C(t)$.  Then, we have the GTE estimator (\ref{GTE_hat}) is unbiased of the true GTE (\ref{GTE}), i.e.,
     $
     \mathbb{E}[\widehat{GTE}]=\mathrm{GTE}.
     $
\end{proposition}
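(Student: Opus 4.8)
The plan is to exploit the defining ``counterfactual'' feature of the interleaving design: under the treatment ranking every seller is scored by $r_i^T(t)=\hat e_i^T(t)$, and the threshold $f_i^{CE,T}(t)$ in (\ref{CEwithoutfT}) aggregates these scores over \emph{all} $N$ sellers rather than over the realized treatment/control split. In the absence of feedback loops these scores are deterministic functions of $t$ that do not depend on the random assignment of sellers to $\mathcal T$ or $\mathcal C$. Comparing (\ref{CEwithoutfT}) with the global-treatment indicator (\ref{GT:I}), and noting that without the state process the global ranking score is also $r_i^{GT}(t)=\hat e_i^{GT}(t)=\hat e_i^T(t)$, I would first argue the pointwise identity $f_i^{CE,T}(t)=f_i(\{r_1^{GT}(t),\dots,r_N^{GT}(t)\},\underline R(t))$, hence $I_i^T(t)=I_i^{GT}(t)$ for every $i$ and every $t$. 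The symmetric argument on the control side gives $I_i^C(t)=I_i^{GC}(t)$.

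Next I would promote these indicator identities to the observed outcomes. Since the realized metric of a seller under a fixed feature is the same whether that seller is evaluated in the experiment or in the global regime, $e_i^T(t)=e_i^{GT}(t)$ and $e_i^C(t)=e_i^{GC}(t)$; combined with the indicator identities this yields
\begin{equation*}
O_i^T(t)=I_i^T(t)\,e_i^T(t)=I_i^{GT}(t)\,e_i^{GT}(t)=O_i^{GT}(t),\qquad O_i^C(t)=O_i^{GC}(t),
\end{equation*}
for all $i=1,\dots,N$. The essential point is that each of these quantities is a deterministic function of $t$ that is \emph{independent of the randomized assignment}, precisely because the interleaved ranking recomputes every seller's position against the full slate of counterfactual scores rather than against the scores of its own assigned group.

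It then remains to take the expectation over the i.i.d.\ Bernoulli$(p)$ assignment. Writing the treatment sum in (\ref{GTE_hat}) as $\sum_{i=1}^N \mathbb I\{i\in\mathcal T\}\int_0^H O_i^T(t)\,dt$ and using independence of $\mathbb I\{i\in\mathcal T\}$ from the deterministic integral, each term contributes $p\int_0^H O_i^{GT}(t)\,dt$, so the factor $p$ cancels the $1/(Np)$ normalization; the control sum is handled identically with $1-p$ cancelling $1/(N(1-p))$. Subtracting the two averages gives $\mathbb E[\widehat{GTE}]=\frac1N\sum_{i=1}^N\int_0^H\big(O_i^{GT}(t)-O_i^{GC}(t)\big)\,dt=\mathrm{GTE}$.

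I expect the only genuine obstacle to be the first step — establishing $O_i^T(t)=O_i^{GT}(t)$ as an exact identity. This hinges entirely on the ranking scores being assignment-independent and coinciding with the global-regime scores, which is exactly what the no-feedback-loop assumption guarantees and exactly what will fail once the state process $S_i(\cdot)$ couples earlier exposure to later ranking; the remaining expectation computation is routine Horvitz--Thompson bookkeeping. I would also note in passing that, unlike Proposition \ref{prop:NEwithoutcase1}, no assumption $\hat e_i^T=\hat e_i^C$ is needed here, since the counterfactual construction restores the correct per-seller indicator even when the two ranking algorithms genuinely differ.
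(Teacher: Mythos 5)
Your proposal is correct and follows essentially the same route as the paper: establish $r_i^T(t)=r_i^{GT}(t)$ and $r_i^C(t)=r_i^{GC}(t)$ from (\ref{CEwithoutr}), deduce $I_i^T(t)=I_i^{GT}(t)$ and $I_i^C(t)=I_i^{GC}(t)$ by matching (\ref{CEwithoutfT})--(\ref{CEwithoutfC}) against (\ref{GT:I}) and (\ref{GC:I}), and finish with the Horvitz--Thompson expectation over the Bernoulli assignment. Your explicit remark that the integrals $\int_0^H O_i^{E}(t)\,dt$ are deterministic and independent of the random assignment is exactly the (implicit) justification the paper uses when it reduces to the expectation computation in the proof of Proposition \ref{prop:NEwithoutcase1}.
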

Table \ref{tab:withfeedbackloop} summarizes all four cases discussed in this subsection, clearly indicating that interference is not severe in the absence of feedback loops in the system and that counterfactual interleaving designs can provide unbiased estimations.
\subsection{Seller-Side Experiments with Feedback Loops}
\label{subsec:with}
In this subsection, we will incorporate the state process (\ref{eq:S}) in our model and evaluate the biases of naive seller-side experiments and the Counterfactual interleaving design for testing different types of treatment features. To the ease of exposition, we assume 
\begin{equation}
  \Gamma \left( S_{i}(t), e_{i}(t),r_{i}(t),I_{i}(t),t\right) = O_{i}(t)=I_{i}(t)e_{i}(t).
  \label{assump:Gamma}
\end{equation}
Note that the e-commerce and cold start examples fit this assumption well. In the e-commerce scenario, 
 $e_i(t)$ could represent a customer's decision to purchase an item from seller $i$ at time $t$, and $S_i(t)$ would denote the cumulative sales of seller $i$ up to time $t$. In the cold start example,  $e_i(t)$ might model a single view of a new video $i$, with $S_i(t)$ indicating the total exposure of seller $i$ up to time $t$. Furthermore, in the advertising example, the state process dynamics slightly differ from the assumption $(\ref{assump:Gamma})$, as $dS_i(t)/dt$ is usually the payment for that auction, while $e_i(t)$ involves metrics like conversions, installs, or clicks. Nevertheless, as we shall see in Section \ref{sec:numerical}, similar results will still hold. In the following discussion of this section, we will use the sales rate control in e-commerce as a running example.
\subsubsection{Naive seller-side experiments}
We rewrite the dynamics for  naive seller-side experiment with interference by feedback loops:
\begin{align}
O_{i}^{E}(t) &=I_{i}^{E}(t)e_{i}^{E}(t), \label{NEwithO}\\
I_{i}^{E}(t) &=\mathbb{I}\left\{ f_i^{NE}(t)\geq0\right\}, \label{NEwithI}
\\
r_{i}^{E}(t) &=\Psi^E (S_{i}^{E}(t),\hat{e}_{i}^{E}(t)), \\
\frac{dS_{i}^{E}(t)}{dt} &=O_{i}^{E}(t), \label{NEwithS}
\end{align}%
where $f_i^{NE}(t)$ is the same function defined in Equation (\ref{NEwithoutf}).

 \textbf{Case 1:} Testing item performance $e(t)$. As usual, we assume $\hat{e}_i^C(t)=\hat{e}_i^T(t)$ for $i=1,\ldots, N$, $t\in[0,H]$ and $\Psi^T(\cdot,\cdot)=\Psi^C(\cdot,\cdot)$. Furthermore, for simplicity, we assume a uniform treatment effect:
 \begin{assumption}
    ${e}_i^T(t)\geq {e}_i^C(t)$, for $i=1,\ldots, N$, $t\in[0,H]$. 
     \label{assump:uniform:e}
 \end{assumption}
The following technical assumption is used in the proof.
 \begin{assumption}
${e}_i(\cdot),\hat{e}_i(\cdot),f_i(\cdot),\Phi$ are continuous and the events $f_i^{NE}(t)=0$,  $f_i^{GT}(t)=0$, $f_i^{GC}(t)=0$ only occur finite times, for $i=1,\ldots,N$, i.e., the following set contains finite elements
     \[
     \left \{t: f_i^{NE}(t)=0 \text{ or } f_i^{GT}(t)=0 \text{ or } f_i^{GC}(t)=0 \text{ for some } i\right \}.
     \]
     \label{assump:technical}
 \end{assumption}

Then, Theorem \ref{thm:NE:with} characterizes that bias directions in this case.
\begin{theorem}
\label{thm:NE:with}
Suppose Assumptions \ref{assump:monotonicity},\ref{assump:uniform:e}  and \ref{assump:technical}   are enforced and $\hat{e}_i^C(t)=\hat{e}_i^T(t)$ for $i=1,2,\ldots, N$ and $t\in[0,H]$, with $\Psi^T(\cdot,\cdot)=\Psi^C(\cdot,\cdot)$.  We consider a naive seller-side experiment following the dynamics (\ref{NEwithO})-(\ref{NEwithS}). Then, we have 
\begin{align}
  \int_{0}^{H} O_{i}^{C}(t)dt &\geq \int_{0}^{H} O_{i}^{GC}(t)dt \text{ for } i\in\mathcal{C},  \\
   \int_{0}^{H} O_{i}^{T}(t)dt &\leq \int_{0}^{H} O_{i}^{GT}(t)dt \text{ for } i\in\mathcal{T} ,
\end{align}
which means the estimator underestimates the true effects, i.e.,
$$ \mathbb{E}[\widehat{GTE}]\leq \mathrm{GTE}.$$
\end{theorem}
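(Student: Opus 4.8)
The plan is to reduce both integral inequalities to comparisons of the terminal states $S_i(H)$ across regimes, and then to prove a single monotone comparison lemma that I apply twice. Since $dS_i^E/dt = O_i^E$ under (\ref{assump:Gamma}) and the initial states are matched, $\int_0^H O_i^E(t)\,dt = S_i^E(H) - S_i^E(0)$, so the control-seller claim is equivalent to $S_i^C(H) \ge S_i^{GC}(H)$ for $i \in \mathcal C$ and the treatment-seller claim is equivalent to $S_i^T(H) \le S_i^{GT}(H)$ for $i \in \mathcal T$. Both follow from the comparison statement: if two copies of the dynamics share the same $\hat e_i$, $\Psi$, $f_i$, $\underline R$, and initial states, and the performance metrics satisfy $e_i^A(t) \ge e_i^B(t)$ for every $i$ and $t$, then $S_i^A(t) \ge S_i^B(t)$ for every $i$ and $t$. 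Taking $A$ to be the experiment and $B$ the global-control regime (legitimate because, in Case 1, treatment sellers have $e^T \ge e^C$ while control sellers keep $e^C$, so $e^A \ge e^B$ pointwise) yields the control-seller inequality; taking $A$ to be the global-treatment regime and $B$ the experiment (now control sellers drop from $e^T$ to $e^C$, so again $e^A \ge e^B$) yields the treatment-seller inequality. Here the Case 1 structure $\Psi^T = \Psi^C$ and $\hat e^T = \hat e^C$ is essential: the ranking score is $r_i = \Psi(S_i, \hat e_i)$ with the same $\Psi$ and $\hat e_i$ in both copies, so the treatment influences rankings only through the accumulated state, and $S_i^A \ge S_i^B$ is equivalent to $r_i^A \le r_i^B$ by Assumption~\ref{assump:monotonicity}.(2).

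The heart of the argument, and the step I expect to be the main obstacle, is the comparison lemma, because the two systems are coupled through the competition in $f_i$ and the indicators $I_i$ are discontinuous in the states. I would prove it by a first-crossing argument. The ordering $r_i^A(t) \le r_i^B(t)$ (equivalently $S_i^A(t) \ge S_i^B(t)$) holds with equality at $t=0$; suppose $\tau$ is the first time it is about to be violated, so by continuity some seller $k$ has $r_k^A(\tau) = r_k^B(\tau)$ with $r_k^A$ poised to exceed $r_k^B$, while every other seller still satisfies $r_j^A(\tau) \le r_j^B(\tau)$. The key observation is that at this instant seller $k$'s own score is equal across the two copies while all competitors' scores are weakly lower in $A$; since $f_k$ is non-increasing in $r_j$ for $j \ne k$ by Assumption~\ref{assump:monotonicity}.(1), we get $f_k^A(\tau) \ge f_k^B(\tau)$ and hence $I_k^A(\tau) \ge I_k^B(\tau)$. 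Combined with $e_k^A \ge e_k^B \ge 0$ this gives $dS_k^A/dt = I_k^A e_k^A \ge I_k^B e_k^B = dS_k^B/dt$ at $\tau$, which forbids $S_k^A$ from dropping below $S_k^B$ and contradicts the assumed crossing. Assumption~\ref{assump:technical} guarantees that the indicators switch only finitely often, so the trajectories are piecewise $C^1$ and the first-crossing time together with the derivative comparison is well defined. The delicate point to check is that several sellers may reach equality simultaneously; this is harmless because a competitor sitting exactly at equality still satisfies $r_j^A \le r_j^B$, which is all that the monotonicity of $f_k$ requires.

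Finally, I would assemble the GTE bound. The two state comparisons hold pathwise for every realized assignment, giving $\int_0^H O_i^T \le \int_0^H O_i^{GT}$ for $i \in \mathcal T$ and $\int_0^H O_i^C \ge \int_0^H O_i^{GC}$ for $i \in \mathcal C$. Substituting into the estimator (\ref{GTE_hat}) bounds it above by $\frac{1}{Np}\sum_{i \in \mathcal T}\int_0^H O_i^{GT}\,dt - \frac{1}{N(1-p)}\sum_{i \in \mathcal C}\int_0^H O_i^{GC}\,dt$. Because the global quantities $\int_0^H O_i^{GT}$ and $\int_0^H O_i^{GC}$ are deterministic and each seller is assigned to $\mathcal T$ with probability $p$ and to $\mathcal C$ with probability $1-p$, taking expectations collapses both group sums to the full average, yielding $\mathbb E[\widehat{GTE}] \le \frac1N \sum_{i=1}^N \int_0^H (O_i^{GT} - O_i^{GC})\,dt = \mathrm{GTE}$.
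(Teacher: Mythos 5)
Your overall route is the same as the paper's: compare the experiment pathwise against the global-treatment system in one direction and against the global-control system in the other via a first-crossing argument, then average over the random assignment. The reduction to a comparison lemma and the final expectation step are both correct. The gap is inside the comparison lemma, exactly at the step you wave through: ``the derivative comparison at $\tau$ forbids $S_k^A$ from dropping below $S_k^B$.'' Because the indicators $I_k=\mathbb{I}\{f_k\geq 0\}$ are discontinuous in the scores, an inequality between the rates at the single time $\tau$ proves nothing about $(\tau,\tau+\epsilon)$. Your argument does cover the cases $f_k^A(\tau)>0$ and $f_k^B(\tau)<0$, where continuity and Assumption \ref{assump:technical} extend the rate comparison to an interval. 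But since $f_k^A(\tau)\geq f_k^B(\tau)$, the one remaining case is $f_k^A(\tau)=f_k^B(\tau)=0$: there both indicators equal $1$ at $\tau$, yet immediately afterwards one can have $f_k^A<0<f_k^B$, i.e., $I_k^A=0$ and $I_k^B=1$, so $S_k^B$ grows past $S_k^A$ even though the rates were correctly ordered at $\tau$ itself. This boundary case is the crux of the theorem; it is precisely Case 3 of the paper's proof, and your argument does not address it.

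Closing that case requires an interval argument, not a pointwise one: suppose the flip persists on some $(\tau,t_3]$, deduce the reversed orderings $S_{i^*}^A\leq S_{i^*}^B$ and hence $r_{i^*}^A\geq r_{i^*}^B$ there for the set $\mathcal{I}^*$ of crossing sellers, and derive a sign contradiction for the threshold functions. This is also where simultaneous crossings, which you dismiss as ``harmless,'' genuinely bite: just after $\tau$, each crossing seller sees the \emph{other} crossing sellers' scores move the wrong way (higher in the dominated system), so monotonicity of an individual $f_{i^*}$ in competitors' scores can no longer sign $f_{i^*}^A-f_{i^*}^B$; only the aggregate $\sum_{i^*\in\mathcal{I}^*}f_{i^*}$ can be signed, using the second half of Assumption \ref{assump:monotonicity}.(1) (the sum being non-decreasing in the own coordinates). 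Your equality-at-$\tau$ observation is true but irrelevant, since the problem occurs on the open interval after $\tau$. Tellingly, your proof never invokes the sum-monotonicity half of Assumption \ref{assump:monotonicity}.(1); the paper states that condition precisely because Case 3 cannot be closed without it. As written, your lemma proof is therefore incomplete at the one place where the feedback loop actually threatens the monotone coupling.
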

The insights from these results can be intuitively explained as follows. Due to the damping effect inherent in pacing algorithms, a pacing algorithm will slow down if the current consumption is already high. If we consider a treatment feature that enhances performance, resulting in higher consumption for treatment sellers, then the ranking scores for these sellers, after being adjusted by the pacing algorithm, will be lower. Consequently, this adjustment leads to an underestimation of the treatment effects.
\begin{table*}[!b]
  \centering
  \caption{Comparing the bias in naive seller-side experiments and counterfactual interleaving design  for different types of treatment features in the presence of feedback loops in the system. Suppose GTE is positive and a pacing algorithm is used.}

    \begin{tabular}{lcc}
    \toprule
     & Naive seller-side experiments & Counterfactual interleaving design \\
     \midrule
    Testing item performance & underestimate & underestimate \\
    Testing the ranking algorithm & unclear & underestimate \\
    \bottomrule
    \end{tabular}%
  \label{tab:withoutfeedbackloop}%
\end{table*}

 \textbf{Case 2:} Testing the ranking algorithm $r_i(t)$. In this case, let consider  a scenario where 
 $\Psi^T (s,\hat{e}_{i}^{E}(t)) \neq \Psi^C (s,\hat{e}_{i}^{C}(t)).$
This could represent two possible situations:
 \begin{enumerate}
     \item \textbf{Alternating the feedback loop dependence.} For instance, if a platform wishes to experiment with a quicker sales speed for specific categories, it implies that sellers, at any level of cumulative sales, would receive a higher ranking score. Mathematically, this is depicted as
     $$ \Psi^T (s,\hat{e}) \geq \Psi^C (s,\hat{e}) \text{ and } \hat{e}_{i}^{T}(t) = \hat{e}_{i}^{C}(t).$$
     \item \textbf{Modifying  the estimation score.} This could mean the platform aims to boost certain sellers to achieve specific campaign objectives. Mathematically, this is written as
     $$ \Psi^T (s,\hat{e}) = \Psi^C (s,\hat{e}) \text{ and } \hat{e}_{i}^{T}(t) \geq \hat{e}_{i}^{C}(t).$$
     Furthermore, the platform might develop new machine learning models to ensure that the estimated $\hat{e}_i(t)$ aligns more accurately with the actual metrics ${e}_i(t)$. in this case, 
      $$ \Psi^T (s,\hat{e}) = \Psi^C (s,\hat{e}),\hat{e}_{i}^{T}(t) \neq \hat{e}_{i}^{C}(t) \text{ and } \sum_{i=1}^{N} O_{i}^{GT}(t)\geq \sum_{i=1}^{N} O_{i}^{GC}(t).$$
 \end{enumerate}
 In this scenario,  the direction of bias   is unclear. Let consider $$\Psi^T (s,\hat{e}_{i}^{E}(t)) \geq \Psi^C (s,\hat{e}_{i}^{C}(t)).$$ On the one hand, as discussed in Section \ref{subsec:without}, due to higher initial ranking scores before adjustment, sellers under treatment would generally receive higher rankings, leading to cannibalization effects. However, on the other hand, the damping effect present in pacing algorithms results in lower ranking scores, which conversely causes the ranking of treatment sellers to drop.
\subsubsection{Counterfactual interleaving design.}
The dynamics of the counterfactual interleaving design in the presence of feedback loops are described below. 
\begin{align}
O_{i}^{E}(t) &=I_{i}^{E}(t)e_{i}^{E}(t), \label{CEwithO}\\
I_{i}^{T}(t) &=\mathbb{I}\left\{ f_i^{CE,T}(t)\geq0\right\}, \text{ for } i \in \mathcal{T} \text{ and } \label{CEwithIT} \\
I_{i}^{C}(t) &=\mathbb{I}\left\{  f_i^{CE,C}(t)\geq0\right\}, \text{ for } i \in \mathcal{C},    \label{CEwithIC}
\\
r_i^T(t)&=\Psi^E (S_{i}^{E}(t),\hat{e}_{i}^{C}(t)) \text{ and } r_i^C(t)=\Psi^E (S_{i}^{E}(t),\hat{e}_{i}^{T}(t)), \\
\frac{dS_{i}^{E}(t)}{dt} &=O_{i}^{E}(t), \label{CEwithS}
\end{align}%
where $f_i^{CE,T}(t)$ and $f_i^{CE,C}(t)$ are  defined in (\ref{CEwithoutfT}) and  (\ref{CEwithoutfC}). 

\textbf{Case 1:} Testing item performance $e_i(t)$. Same as above, we assume $\hat{e}_i^C(t)=\hat{e}_i^T(t)$ for $i=1,2,\ldots, N$ and $t\in[0,H]$. 
\begin{theorem}
\label{thm:CE:with1}
Suppose Assumptions \ref{assump:monotonicity}, \ref{assump:uniform:e} and \ref{assump:technical}\footnote{We replace $f^{NE}(\cdot)$ with $f^{CE,T}(\cdot)$   and $f^{CE,C}(\cdot)$.}  are enforced and $\hat{e}_i^C(t)=\hat{e}_i^T(t)$ for $i=1,2,\ldots, N$ and $t\in[0,H]$, with $\Psi^T(\cdot,\cdot)=\Psi^C(\cdot,\cdot)$.  We consider a counterfactual interleaving design following the dynamics (\ref{CEwithO})-(\ref{CEwithS}). Then, we have 
\begin{align}
  \int_{0}^{H} O_{i}^{C}(t)dt &\geq \int_{0}^{H} O_{i}^{GC}(t)dt \text{ for } i\in\mathcal{C},  \\
   \int_{0}^{H} O_{i}^{T}(t)dt &\leq \int_{0}^{H} O_{i}^{GT}(t)dt \text{ for } i\in\mathcal{T} ,
\end{align}
which means the estimator underestimates the true effects, i.e.,
$$ \mathbb{E}[\widehat{GTE}]\leq \mathrm{GTE}.$$
\end{theorem}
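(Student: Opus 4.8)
The plan is to reduce the interleaving design in Case 1 to a single ranking map shared by the experiment, the global-treatment (GT), and the global-control (GC) regimes, and then to prove a monotone comparison lemma for the resulting coupled state dynamics. First I would exploit the Case 1 hypotheses $\hat{e}_i^C(t)=\hat{e}_i^T(t)$ and $\Psi^T=\Psi^C$. Under these, the two ranking-score copies maintained by the design coincide, $r_i^T(t)=r_i^C(t)=\Psi(S_i(t),\hat{e}_i(t))$, so Ranking T and Ranking C are identical and $f_i^{CE,T}(t)=f_i^{CE,C}(t)$. Hence every seller, treatment or control, is selected by the same rule $I_i(t)=\mathbb{I}\{f_i(\{\Psi(S_k(t),\hat{e}_k(t))\}_{k=1}^N,\underline{R}(t))\ge 0\}$, and the GT and GC dynamics in (\ref{GT:O})--(\ref{GC:S}) use this very map. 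Consequently the experiment, GT, and GC differ only through the performance functions driving the states via $dS_i/dt=O_i=I_ie_i$: GC uses $e_i^C$ for all $i$, GT uses $e_i^T$ for all $i$, and the experiment uses $e_i^T$ on $\mathcal{T}$ and $e_i^C$ on $\mathcal{C}$. By Assumption \ref{assump:uniform:e} these performance vectors are pointwise ordered, $e^{GC}\le e^{CE}\le e^{GT}$ for every seller and every $t$.

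The core is the following comparison claim: for two instances of $dS_j/dt=I_j(\vec{S})\,e_j(t)$ sharing the same selection map and the same initial states, if $e_j^{\alpha}(t)\le e_j^{\beta}(t)$ for all $j,t$, then $S_j^{\alpha}(t)\le S_j^{\beta}(t)$ for all $j,t$. I would prove this by a first-crossing argument exploiting that the system is quasi-monotone (cooperative) in the states: raising a competitor's state $S_k$ lowers its score $r_k=\Psi(S_k,\cdot)$ by Assumption \ref{assump:monotonicity}.(2), which raises $f_j$ by Assumption \ref{assump:monotonicity}.(1) and can only switch $I_j$ on, while raising one's own state can only switch $I_j$ off. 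Let $\tau=\inf\{t: S_j^{\beta}(t)<S_j^{\alpha}(t)\text{ for some }j\}$; by continuity $S_j^{\beta}(\tau)\ge S_j^{\alpha}(\tau)$ for all $j$, with equality for some $j_0$. At $\tau$ every competitor satisfies $r_k^{\beta}(\tau)\le r_k^{\alpha}(\tau)$ while $r_{j_0}^{\beta}(\tau)=r_{j_0}^{\alpha}(\tau)$, so the monotonicity of $f_{j_0}$ forces $f_{j_0}^{\beta}(\tau)\ge f_{j_0}^{\alpha}(\tau)$ and hence $I_{j_0}^{\beta}(\tau)\ge I_{j_0}^{\alpha}(\tau)$; combined with $e_{j_0}^{\beta}\ge e_{j_0}^{\alpha}\ge 0$ this gives $\dot{S}_{j_0}^{\beta}(\tau^+)\ge\dot{S}_{j_0}^{\alpha}(\tau^+)$, blocking the crossing. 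Applying this with $(\alpha,\beta)=(\mathrm{GC},\mathrm{CE})$ yields $S_i^{C}(t)\ge S_i^{GC}(t)$, and with $(\alpha,\beta)=(\mathrm{CE},\mathrm{GT})$ yields $S_i^{T}(t)\le S_i^{GT}(t)$; since the initial states agree and $S_i(H)-S_i(0)=\int_0^H O_i\,dt$, integrating gives exactly the two stated integral inequalities.

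I expect the main obstacle to be rigor in the comparison lemma, because the selection indicators make the right-hand sides discontinuous in the state, so the classical Kamke--Müller comparison theorem does not apply directly. This is exactly where Assumption \ref{assump:technical} enters: continuity of $e_i,\hat{e}_i,f_i,\Psi$ together with finiteness of the set of times at which any $f_i$ vanishes guarantees that all three indicator processes are piecewise constant with only finitely many switches. I would therefore run the first-crossing argument on the finitely many subintervals on which the indicators are constant, handling the isolated switching instants through one-sided (right) limits, so that $\dot{S}_{j_0}(\tau^+)$ is well defined and the derivative inequality genuinely prevents a crossing; I would also check the base case at $t=0^+$, where equal states force equal scores and indicators, so the ordering starts non-strictly in the correct direction.

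Finally I would translate the pathwise inequalities into the bias statement. For each fixed realization of the random assignment the above gives $\int_0^H O_i^{T}\,dt\le\int_0^H O_i^{GT}\,dt$ for $i\in\mathcal{T}$ and $\int_0^H O_i^{C}\,dt\ge\int_0^H O_i^{GC}\,dt$ for $i\in\mathcal{C}$, while each $\int_0^H O_i^{GT}\,dt$ and $\int_0^H O_i^{GC}\,dt$ is deterministic (independent of the assignment). Conditioning on $i\in\mathcal{T}$ and on $i\in\mathcal{C}$ and taking expectations in (\ref{GTE_hat}) then yields $\mathbb{E}[\frac{1}{Np}\sum_{i\in\mathcal{T}}\int O_i^{T}]\le\frac{1}{N}\sum_i\int O_i^{GT}$ and $\mathbb{E}[\frac{1}{N(1-p)}\sum_{i\in\mathcal{C}}\int O_i^{C}]\ge\frac{1}{N}\sum_i\int O_i^{GC}$, so that $\mathbb{E}[\widehat{GTE}]\le\mathrm{GTE}$ by the definition (\ref{GTE}) of the true effect.
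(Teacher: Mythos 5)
Your overall route coincides with the paper's: the Case 1 hypotheses collapse the interleaving design onto the naive one (since $\hat e_i^T=\hat e_i^C$ and $\Psi^T=\Psi^C$ give $r_i^T(t)=r_i^C(t)$ and hence $f_i^{CE,T}=f_i^{CE,C}=f_i^{NE}$), and your two applications of the comparison claim, for the pairs $(\mathrm{GC},\mathrm{CE})$ and $(\mathrm{CE},\mathrm{GT})$, together with the final expectation step, are exactly the structure of the paper's proof of Theorem \ref{thm:NE:with}. The gap is in how you close the first-crossing argument, and it sits precisely at what the paper isolates as Case 3. At the crossing time $\tau$ you correctly obtain $f_{j_0}^{\beta}(\tau)\ge f_{j_0}^{\alpha}(\tau)$ and hence $I_{j_0}^{\beta}(\tau)\ge I_{j_0}^{\alpha}(\tau)$, but this ordering holds only at the single instant $\tau$. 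In the one nontrivial case, $f_{j_0}^{\alpha}(\tau)=f_{j_0}^{\beta}(\tau)=0$ (i.e., $\tau$ is itself a switching instant), Assumption \ref{assump:technical} guarantees the signs of both $f$'s are constant on some $(\tau,\tau+\epsilon)$, but it does not guarantee they are concordant: the discordant pattern $f_{j_0}^{\alpha}>0>f_{j_0}^{\beta}$ just after $\tau$ is not excluded by anything you have established, and under that pattern $\dot S_{j_0}^{\alpha}(\tau^{+})=e_{j_0}^{\alpha}>0=\dot S_{j_0}^{\beta}(\tau^{+})$, so your right-derivative inequality fails and the crossing goes through. The indicator ordering at the point $\tau$ simply does not pass to right limits, so ``handling the isolated switching instants through one-sided limits'' is an assertion, not a proof.

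Ruling out the discordant pattern is the real content of the paper's argument, and it needs two ingredients missing from your sketch. First, one must argue on an interval $(\tau,t_3]$ rather than at a point: assuming the bad pattern holds there, the induced dynamics give $S_{i^*}^{\alpha}(t)\ge S_{i^*}^{\beta}(t)$ for the crossing sellers and $S_j^{\alpha}(t)\le S_j^{\beta}(t)$ for the others, hence score orderings $r_{i^*}^{\beta}(t)\ge r_{i^*}^{\alpha}(t)$ and $r_j^{\beta}(t)\le r_j^{\alpha}(t)$ on that interval. Second, several sellers may cross simultaneously, and the contradiction cannot be extracted coordinate by coordinate: for a fixed $j_0$, the \emph{other} crossing sellers' scores satisfy $r_k^{\beta}\ge r_k^{\alpha}$, which by the off-diagonal monotonicity of Assumption \ref{assump:monotonicity}.(1) pushes $f_{j_0}^{\beta}$ \emph{down}, consistent with the bad pattern. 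The paper instead sums over the whole crossing set $\mathcal{I}^{\ast}$ and invokes the second half of Assumption \ref{assump:monotonicity}.(1) --- monotonicity of $\sum_{i\in\mathcal{I}}f_i$ in the scores $r_i$, $i\in\mathcal{I}$ --- to conclude $\sum_{i^*\in\mathcal{I}^{\ast}}f_{i^*}^{\alpha}(t)\le \sum_{i^*\in\mathcal{I}^{\ast}}f_{i^*}^{\beta}(t)$ on $(\tau,t_3]$, contradicting $f_{i^*}^{\alpha}>0>f_{i^*}^{\beta}$ for all $i^*\in\mathcal{I}^{\ast}$. This joint, interval-based, sum-monotonicity argument is why that assumption is phrased for sums over sets rather than for each $f_i$ separately; your single-coordinate quasi-monotone (Kamke--M\"uller) framing has no mechanism to use it, and without it the comparison lemma --- and hence the theorem --- does not close.
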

The proof follows from Theorem \ref{thm:NE:with} by noting that $r_i^T(t)=r_i^C(t)$ and $f_i^{CE,T}(\cdot)=f_i^{CE,C}(\cdot)=f_i^{NE}(\cdot)$.

\textbf{Case 2:} Testing the ranking algorithm $r_i(t)$. For the ease of exposition, we assume the following uniform treatment effects on the ranking function.
 \begin{assumption}
     We assume there exists $\alpha>1$ such that
     \begin{align*}
     &\Psi^T (s,\hat{e}_{i}^{E}(t)) = \alpha \Psi^C (s,\hat{e}_{i}^{C}(t)), \text{ and} \\
     &f_i(\alpha r_1,\ldots, \alpha r_N) \geq f_i(r_1,\ldots,  r_N) \text { if } f_i(r_1,\ldots,  r_N)\geq 0,
     \end{align*}
     for $s\geq0,i=1,2,\ldots, N$, and $t\in[0,H]$. 
     \label{assump:uniform:psi}
 \end{assumption}
Assumption \ref{assump:uniform:psi} guarantees that an item recommended by the control algorithm would also be recommended by the treatment algorithm. Function (\ref{eq:formula:max}) meets this requirement.
The following theorem states that the estimator obtained from counterfactual interleaving design will also underestimate the true GTE when testing the ranking algorithm  in the presence of feedback loops. 
\begin{theorem}
\label{thm:CE:with2}
Suppose Assumptions \ref{assump:monotonicity}, \ref{assump:technical}\footnote{We replace $f^{NE}(\cdot)$ with $f^{CE,T}(\cdot)$   and $f^{CE,C}(\cdot)$.}  and \ref{assump:uniform:psi} are enforced and ${e}_i^C(t)={e}_i^T(t)$ for $i=1,2,\ldots, N$ and $t\in[0,H]$.  We consider a counterfactual interleaving design following the dynamics (\ref{CEwithO})-(\ref{CEwithS}). Then, we have 
\begin{align}
  \int_{0}^{H} O_{i}^{C}(t)dt &\geq \int_{0}^{H} O_{i}^{GC}(t)dt \text{ for } i\in\mathcal{C}  \\
   \int_{0}^{H} O_{i}^{T}(t)dt &\leq \int_{0}^{H} O_{i}^{GT}(t)dt \text{ for } i\in\mathcal{T} ,
\end{align}
which means the estimator underestimates the true effects, i.e.,
$$ \mathbb{E}[\widehat{GTE}]\leq \mathrm{GTE}.$$
\end{theorem}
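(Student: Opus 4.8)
The plan is to follow the two-layer structure already implicit in the statement: first reduce the bias inequality $\mathbb{E}[\widehat{GTE}]\le\mathrm{GTE}$ to the two displayed pathwise (per-assignment) integral inequalities, and then reduce those to a two-sided comparison of the underlying state processes. For the first reduction I fix the realized assignment $(\mathcal{T},\mathcal{C})$ and use that $O_i^{GT}(t)$ and $O_i^{GC}(t)$ are generated by the global regimes and hence do not depend on the assignment. Granting the two integral inequalities, this yields, pathwise,
\[
\widehat{GTE}\le\frac{1}{Np}\sum_{i=1}^N\mathbb{I}\{i\in\mathcal{T}\}\int_0^H O_i^{GT}(t)\,dt-\frac{1}{N(1-p)}\sum_{i=1}^N\mathbb{I}\{i\in\mathcal{C}\}\int_0^H O_i^{GC}(t)\,dt,
\]
and taking expectations over the Bernoulli$(p)$ assignment, with $\mathbb{E}\,\mathbb{I}\{i\in\mathcal{T}\}=p$ and $\mathbb{E}\,\mathbb{I}\{i\in\mathcal{C}\}=1-p$, collapses the right-hand side exactly to $\mathrm{GTE}$ in \eqref{GTE}. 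So everything reduces to the two pathwise integral inequalities.

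For the second reduction, I use that in Case~2 the true performance is common, $e_i^C=e_i^T$, and that the state obeys $dS_i/dt=O_i=I_i e_i$ with a common initial condition across the three regimes. Hence $\int_0^H O_i^{C}\,dt=S_i^{(CE)}(H)-S_i(0)$ and similarly for the global regimes, so the two integral inequalities are equivalent to the two-sided state ordering
\[
S_i^{GC}(t)\le S_i^{(CE)}(t)\le S_i^{GT}(t),\qquad i=1,\dots,N,\ t\in[0,H],
\]
where $S_i^{(CE)}$ is the state of seller $i$ under its assigned arm in the interleaving design. I would prove the two one-sided orderings separately by monotone ODE comparison, each via a first-touching-time analysis.

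Consider the upper bound and set $D_i=S_i^{GT}-S_i^{(CE)}$, so $D_i(0)=0$ and $D_i'=(I_i^{GT}-I_i^{(CE)})e_i$. It suffices to show that whenever $D_j\ge0$ for all $j$ and $D_i=0$, one has $I_i^{(CE)}\le I_i^{GT}$, forcing $D_i'\ge0$ and precluding a downward crossing. At such a configuration the own ranking scores coincide (equal states) while each competitor $j$ satisfies $S_j^{(CE)}\le S_j^{GT}$, so by the damping monotonicity (Assumption \ref{assump:monotonicity}.(2)) every competitor's score is at least as large in the interleaving system. For a seller whose arm matches the global operator — a treatment seller, ranked by $f_i^{CE,T}$ versus $f_i^{GT}$ — the indicator inequality follows directly from $f_i$ being non-increasing in competitors' scores (Assumption \ref{assump:monotonicity}.(1)). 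For a seller whose arm uses the opposite operator — a control seller, ranked by control scores in the interleaving design but by the $\alpha$-scaled treatment scores under global treatment — I would insert the intermediate quantity $f_i(\{\Psi^C(S_j^{GT}(t),\hat{e}_j^C(t))\}_j,\underline{R}(t))$ and chain the competitor monotonicity with the scaling inequality of Assumption \ref{assump:uniform:psi}: if $I_i^{(CE)}=1$ then this intermediate quantity is nonnegative, whence the scaling bound gives $f_i^{GT}\ge0$ and $I_i^{GT}=1$. The lower bound $S_i^{GC}\le S_i^{(CE)}$ is symmetric, with the roles reversed (control sellers by direct monotonicity, treatment sellers by the scaling inequality), exactly as Theorem~\ref{thm:NE:with} handles the matched-operator case.

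The step I expect to be the main obstacle is making this touching-time argument rigorous despite the indicators $I_i=\mathbb{I}\{f_i\ge0\}$ being discontinuous in the states and despite all $N$ sellers' states being coupled through the shared ranking. Assumption \ref{assump:technical} is exactly what I would rely on: since each of $f_i^{CE,T}=0$, $f_i^{CE,C}=0$, $f_i^{GT}=0$, $f_i^{GC}=0$ occurs only finitely often, the indicators are piecewise constant and the states piecewise $C^1$, so $[0,H]$ partitions into finitely many intervals on which ordinary comparison applies, with the finitely many crossing instants handled by continuity of the $f_i$'s. The delicate point is that at a touching time the argument yields only the non-strict inequality $D_i'\ge0$; I would exclude an immediate strict violation by combining continuity of $f_i^{(CE)}$ and $f_i^{GT}$ at the crossing with the piecewise-constant structure, so that $I_i^{(CE)}=1$ just to the right of the crossing forces $f_i^{GT}\ge0$ and hence $I_i^{GT}=1$ there as well, restoring $D_i'\ge0$ on the whole interval.
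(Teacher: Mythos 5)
Your overall architecture parallels the paper's own proof: reduce the bias bound $\mathbb{E}[\widehat{GTE}]\le\mathrm{GTE}$ to pathwise integral inequalities (equivalently, state orderings, since $e_i^T=e_i^C$ and $dS_i/dt=O_i$), prove those by a first-violation-time comparison, handle same-arm sellers via the competitor monotonicity of Assumption \ref{assump:monotonicity}.(1), and handle cross-arm sellers by chaining through the $\alpha$-scaling of Assumption \ref{assump:uniform:psi}; your intermediate-quantity insertion for control sellers is a legitimate reordering of the paper's chain $f_{i}^{CE,C}\le f_{i}^{CE,T}$ followed by comparison with $f_i^{GT}$. However, there is a genuine gap at exactly the point you flag as delicate, and your proposed fix does not repair it.

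The problematic configuration is the paper's Case 3: at the touching time $\tau$ one has $f_i^{(CE)}(\tau)=f_i^{GT}(\tau)=0$ simultaneously. You assert that continuity of the $f$'s together with the piecewise-constant structure of the indicators forces $f_i^{GT}\ge 0$ just to the right of $\tau$ whenever $I_i^{(CE)}=1$ there. That implication is unjustified: continuity with a common zero at $\tau$ is entirely consistent with the sign pattern $f_i^{(CE)}(t)>0>f_i^{GT}(t)$ on $(\tau,\tau+\epsilon)$ (a sign change at an isolated zero is precisely what Assumption \ref{assump:technical} permits), and under that pattern $D_i'=-e_i<0$, so the comparison collapses. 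The paper excludes this pattern by a self-consistency contradiction that is inherently \emph{collective} rather than seller-by-seller: assuming the bad signs hold on $(\tau,t_3]$ for every seller $i^*$ in the violating set $\mathcal{I}^*$, the induced dynamics give $S_{i^*}^{E}\ge S_{i^*}^{GT}$ for $i^*\in\mathcal{I}^*$ and $S_j^{E}\le S_j^{GT}$ for $j\notin\mathcal{I}^*$, hence $r_{i^*}^{T}\le r_{i^*}^{GT}$ and $r_j^{T}\ge r_j^{GT}$; then the \emph{aggregate} own-score monotonicity — the second half of Assumption \ref{assump:monotonicity}.(1), stating that $\sum_{i\in\mathcal{I}}f_i$ is non-decreasing in $r_i$ for $i\in\mathcal{I}$, which your proposal never invokes — yields $\sum_{i^*\in\mathcal{I}^*}f_{i^*}^{CE,T}(t)\le\sum_{i^*\in\mathcal{I}^*}f_{i^*}^{GT}(t)$, contradicting $f_{i^*}^{CE,T}>0>f_{i^*}^{GT}$. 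This summation over all simultaneously violating sellers is unavoidable under the stated assumptions, because $f_i$ is \emph{not} assumed monotone in its own coordinate individually, only in aggregate; a purely individual touching-time argument like yours therefore cannot close the degenerate case. (In that summation step the paper first uses Assumption \ref{assump:uniform:psi} to upgrade $f_{i^*}^{CE,C}>0$ to $f_{i^*}^{CE,T}>0$ for control-arm violators, so the whole contradiction is run in treatment scores.)
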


Let's explain these results in an intuitive manner. Consider a scenario where the treatment algorithms employ a more rapid pacing speed. This implies that at any given time, the consumption of inventories under the treatment algorithms would typically exceed that of the control algorithms. In the counterfactual interleaving design, this would mean that treatment sellers consume more than their control counterparts on average. When the treatment algorithm is employed across all items to derive Ranking T, treatment items, on average, tend to rank lower compared to their positions in the global treatment setting. This is primarily due to the damping effect inherent in pacing algorithms. As a result, the treatment ranking realized in the experiments may not accurately reflect the ranking in the global treatment regime.


Table \ref{tab:withoutfeedbackloop} summarizes all four cases discussed in this subsection.   It is evident that interference is always present when the system is affected by feedback loops. Additionally, the counterfactual interleaving design typically underestimates the true effect if a pacing algorithm with damping effects is employed.

\textbf{Detecting interference induced by feedback loops.} To this end, we propose a simple method to detect the interference induced by feedback loops in the counterfactual interleaving design. We will compare the average rankings of treatment sellers and control sellers when both are under treatment algorithms,  i.e., comparing $\{r_i^T,i \in \mathcal{T}\}$ with  $\{r_i^T,i \in \mathcal{C}\}$. If these averages are similar, then the interference is likely not significant. However, if there is a noticeable deviation as the experiment progresses, the interference may render the experiment invalid. Similarly, we can also compare the average rankings of the treatment sellers and control sellers both under the control algorithms, i.e., comparing $\{r_i^C,i \in \mathcal{T}\}$ with  $\{r_i^C,i \in \mathcal{C}\}$. We shall implement this method in our empirical study.

\section{Empirical Results based on Real-World A/B tests}
\label{sec:numerical}

We partnered with the advertising recommendation team at Tencent, a world-class content-sharing platform. In the advertising recommendations, it is commonly observed the estimated scores overestimate the real effect, due to perhaps maximization bias, especially for those ads with low impressions \citep{fan2022calibration}. To address this bias, Tencent implemented a strategy: at any time 
$t$, if the cumulative realized value (clicks, conversions, etc.) up to that moment surpasses the cumulative estimated value, the current estimated scores are adjusted upward. Conversely, if the overall realized value falls short of the cumulative estimated value, the scores are decremented.

This type of feedback loop differs slightly from the pacing algorithm model discussed in this paper. However, the results and insights derived should be similar. In mathematical terms, 
$e_i(t)$ is the raw estimated score for the $i$-th ad in relation to a request at time $t$. Additionally,  $s_i(\cdot)$  is the state process that captures the cumulative overestimation (or underestimation)  for the $i$-th ad up until time $t$, where
$$s_i(t) = \left. \left(\int_0^t r_i(s)ds\right)\right/\left(\int_0^t e_i(s)ds\right).$$
Given these, the ranking score can be defined as 
$r_i(t) = \lambda_i(s_i(t))e_i(t)$, for  $i = 1,\ldots, N,$ 
where $\lambda_i(s_i(t))$ represents the adjustment factor. At the beginning of a day, $\lambda$ is initialized at  1. Further, $\lambda_i(s)$ decreases as $s$ increases, and $\lambda_i(s_i(t))<1$  adjusts for overestimation  ($s_i(t)>1$), while $\lambda_i(s_i(t))>1$  adjusts for  underestimation ($s_i(t)<1$).

Our empirical observation reveals a challenge of this adjusting mechanism:  the values of 
 $\lambda$s tend to fluctuate significantly due to the inherent randomness in realized values. Such volatility can negatively impact the overall performance. To address this, we devised a new strategy to constrain the variability of $\lambda$, effectively reducing its swing or ``effective range.'' 
 
In our experiments, we'll contrast this new strategy with the original approach using the counterfactual interleaving design. More precisely, we'll be comparing the treatment adjustment method $\lambda^T(\cdot)$ with the control adjustment method 
 $\lambda^C(\cdot)$, where we allocate 10\% control ads and 10\% treatment ads.  Specifically, the platform ranks and charges based on the scores
\begin{align*}
r^T_i(t) = \lambda^T_i(s_i(t))e_i(t), \text{ and } r^C_i(t) = \lambda^C_i(s_i(t))e_i(t),
\end{align*}
\text{ for } i$ = 1,2,\ldots, N$, and $|\lambda_i^T(s)-1|<|\lambda_i^C(s)-1|$ for any $s \in \mathbb{R}_+$.

Despite simulations and A/A tests consistently indicating the superiority of the treatment strategy over the control strategy, the counterfactual interleaving design suggests otherwise: Table \ref{tab:addlabel} shows that the estimation from the counterfactual interleaving experiments suggest a  strongly negative effects.

We attribute this discrepancy to interference arising from feedback loops. Given the systematic overestimation and our specific treatment strategy, it's anticipated that  $ \lambda_i^T(s_i(t-))$ would typically exceed $\lambda_i^C(s_i(t-))$.  Consequently, $s^T_i(t-) > s^C_i(t-)$ in general. Due to the inverse monotonic behavior of 
 $\lambda (\cdot)$, this means that $ \lambda_i^T(s^T_i(t-))<\lambda_i^T(s^C_i(t-))$ and $ \lambda_i^C(s^T_i(t-))<\lambda_i^C(s^C_i(t-))$, causing that the treatment ads frequently rank lower, while control ads often rank higher in the experiment. To bolster our rationale, we visualize the average $\lambda^T$  values under the treatment strategy, $\bar{\lambda}^T$s, for the control ads, treatment ads, and other ads over time in Figure  \ref{fig:average_lambda}. The figure elucidates that while the $\bar{\lambda}^T$s are nearly identical across the three groups initially, the $\bar{\lambda}^T$s  in the treatment group noticeably diminishes towards the day's end.

\begin{table*}[!h]
	\centering
	\caption{The experimental results using the counterfactual interleaving design}
	
	\begin{tabular}{cccccc}
		\toprule
		\multicolumn{2}{c}{Advertising cost (consumption)} & \multicolumn{2}{c}{Views} & \multicolumn{2}{c}{Gross merchandise value (GMV)} \\
		
		Estimator & Confidence Interval & Estimator & Confidence Interval & Estimator & Confidence Interval \\
		\midrule
		-23\% & [-34\%,-12\%] & -27\% & [-38\%,-15\%] & -21\% & [-34\%,-9\%] \\
		\bottomrule
	\end{tabular}%
	\label{tab:addlabel}%
\end{table*}%

\begin{figure*}[!h]
	\centering
	\includegraphics[width=0.75\textwidth]{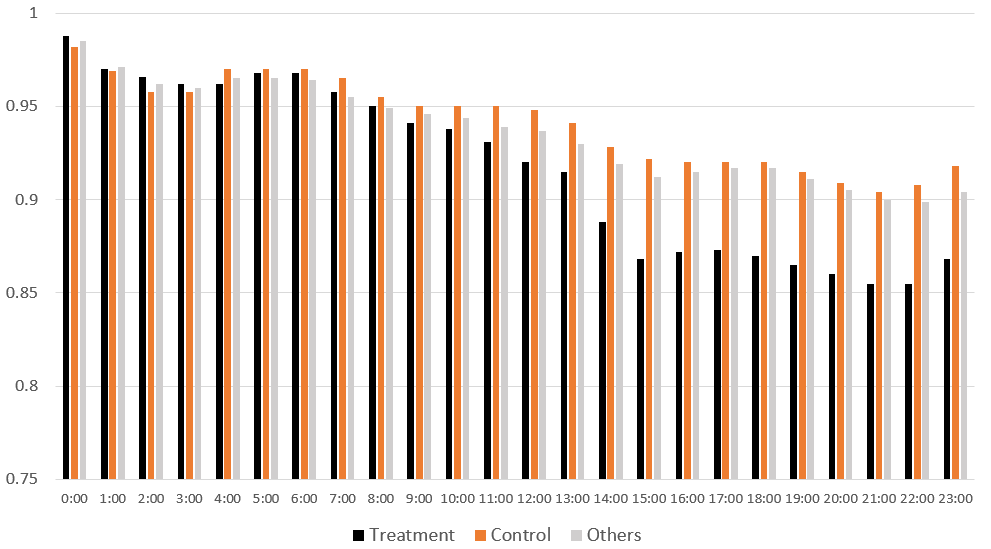}
	\caption{Average $\lambda^T$ values, $\bar{\lambda}^T$s, for the control ads, treatment ads, and other ads over time in a day.}
	\label{fig:average_lambda}
\end{figure*}
\section{Conclusion}
In this paper, we develop a framework to study seller-side experiments affected by interference from feedback loops. Our findings indicate that counterfactual interleaving designs tend to underestimate the true treatment effects when a damping pace algorithm is employed. Additionally, we introduce a method for detecting interference by comparing the rankings of treatment and control groups under the same treatment/control algorithms. As a direction for future research, it would be valuable to explore how experiments can be conducted optimally in scenarios with feedback loops.
\label{sec:conclusion}
\bibliographystyle{ACM-Reference-Format}
\bibliography{feedback,mybib}


\begin{thebibliography}{51}


\ifx \showCODEN    \undefined \def \showCODEN     #1{\unskip}     \fi
\ifx \showDOI      \undefined \def \showDOI       #1{#1}\fi
\ifx \showISBNx    \undefined \def \showISBNx     #1{\unskip}     \fi
\ifx \showISBNxiii \undefined \def \showISBNxiii  #1{\unskip}     \fi
\ifx \showISSN     \undefined \def \showISSN      #1{\unskip}     \fi
\ifx \showLCCN     \undefined \def \showLCCN      #1{\unskip}     \fi
\ifx \shownote     \undefined \def \shownote      #1{#1}          \fi
\ifx \showarticletitle \undefined \def \showarticletitle #1{#1}   \fi
\ifx \showURL      \undefined \def \showURL       {\relax}        \fi
\providecommand\bibfield[2]{#2}
\providecommand\bibinfo[2]{#2}
\providecommand\natexlab[1]{#1}
\providecommand\showeprint[2][]{arXiv:#2}

\bibitem[Aronow and Samii(2013)]%
        {aronow2017estimating}
\bibfield{author}{\bibinfo{person}{Peter~M Aronow} {and} \bibinfo{person}{Cyrus
  Samii}.} \bibinfo{year}{2013}\natexlab{}.
\newblock \showarticletitle{Estimating Average Causal Effects Under General
  Interference, with Application to a Social Network Experiment}.
\newblock \bibinfo{journal}{\emph{arXiv preprint arXiv:1305.6156}}
  (\bibinfo{year}{2013}).
\newblock


\bibitem[Bajari et~al\mbox{.}(2021)]%
        {bajari2021multiple}
\bibfield{author}{\bibinfo{person}{Patrick Bajari}, \bibinfo{person}{Brian
  Burdick}, \bibinfo{person}{Guido~W Imbens}, \bibinfo{person}{Lorenzo
  Masoero}, \bibinfo{person}{James McQueen}, \bibinfo{person}{Thomas
  Richardson}, {and} \bibinfo{person}{Ido~M Rosen}.}
  \bibinfo{year}{2021}\natexlab{}.
\newblock \showarticletitle{Multiple randomization designs}.
\newblock \bibinfo{journal}{\emph{arXiv preprint arXiv:2112.13495}}
  (\bibinfo{year}{2021}).
\newblock


\bibitem[Basse et~al\mbox{.}(2023)]%
        {basse2023minimax}
\bibfield{author}{\bibinfo{person}{Guillaume~W Basse}, \bibinfo{person}{Yi
  Ding}, {and} \bibinfo{person}{Panos Toulis}.}
  \bibinfo{year}{2023}\natexlab{}.
\newblock \showarticletitle{Minimax designs for causal effects in temporal
  experiments with treatment habituation}.
\newblock \bibinfo{journal}{\emph{Biometrika}} \bibinfo{volume}{110},
  \bibinfo{number}{1} (\bibinfo{year}{2023}), \bibinfo{pages}{155--168}.
\newblock


\bibitem[Blake and Coey(2014)]%
        {blake2014marketplace}
\bibfield{author}{\bibinfo{person}{Thomas Blake} {and} \bibinfo{person}{Dominic
  Coey}.} \bibinfo{year}{2014}\natexlab{}.
\newblock \showarticletitle{Why marketplace experimentation is harder than it
  seems: The role of test-control interference}. In
  \bibinfo{booktitle}{\emph{Proceedings of the fifteenth ACM conference on
  Economics and computation}}. \bibinfo{pages}{567--582}.
\newblock


\bibitem[Bojinov et~al\mbox{.}(2023)]%
        {bojinov2023design}
\bibfield{author}{\bibinfo{person}{Iavor Bojinov}, \bibinfo{person}{David
  Simchi-Levi}, {and} \bibinfo{person}{Jinglong Zhao}.}
  \bibinfo{year}{2023}\natexlab{}.
\newblock \showarticletitle{Design and analysis of switchback experiments}.
\newblock \bibinfo{journal}{\emph{Management Science}} \bibinfo{volume}{69},
  \bibinfo{number}{7} (\bibinfo{year}{2023}), \bibinfo{pages}{3759--3777}.
\newblock


\bibitem[Bright et~al\mbox{.}(2022)]%
        {bright2022reducing}
\bibfield{author}{\bibinfo{person}{Ido Bright}, \bibinfo{person}{Arthur
  Delarue}, {and} \bibinfo{person}{Ilan Lobel}.}
  \bibinfo{year}{2022}\natexlab{}.
\newblock \showarticletitle{Reducing marketplace interference bias via shadow
  prices}.
\newblock \bibinfo{journal}{\emph{arXiv preprint arXiv:2205.02274}}
  (\bibinfo{year}{2022}).
\newblock


\bibitem[Candogan et~al\mbox{.}(2023)]%
        {candogan2023correlated}
\bibfield{author}{\bibinfo{person}{Ozan Candogan}, \bibinfo{person}{Chen Chen},
  {and} \bibinfo{person}{Rad Niazadeh}.} \bibinfo{year}{2023}\natexlab{}.
\newblock \showarticletitle{Correlated cluster-based randomized experiments:
  Robust variance minimization}.
\newblock \bibinfo{journal}{\emph{Management Science}} (\bibinfo{year}{2023}).
\newblock


\bibitem[Chaney et~al\mbox{.}(2018)]%
        {chaney2018algorithmic}
\bibfield{author}{\bibinfo{person}{Allison~JB Chaney},
  \bibinfo{person}{Brandon~M Stewart}, {and} \bibinfo{person}{Barbara~E
  Engelhardt}.} \bibinfo{year}{2018}\natexlab{}.
\newblock \showarticletitle{How algorithmic confounding in recommendation
  systems increases homogeneity and decreases utility}. In
  \bibinfo{booktitle}{\emph{Proceedings of the 12th ACM conference on
  recommender systems}}. \bibinfo{pages}{224--232}.
\newblock


\bibitem[Eckles et~al\mbox{.}(2017)]%
        {eckles2017design}
\bibfield{author}{\bibinfo{person}{Dean Eckles}, \bibinfo{person}{Brian
  Karrer}, {and} \bibinfo{person}{Johan Ugander}.}
  \bibinfo{year}{2017}\natexlab{}.
\newblock \showarticletitle{Design and analysis of experiments in networks:
  Reducing bias from interference}.
\newblock \bibinfo{journal}{\emph{Journal of Causal Inference}}
  \bibinfo{volume}{5}, \bibinfo{number}{1} (\bibinfo{year}{2017}).
\newblock


\bibitem[Fan et~al\mbox{.}(2022)]%
        {fan2022calibration}
\bibfield{author}{\bibinfo{person}{Yewen Fan}, \bibinfo{person}{Nian Si}, {and}
  \bibinfo{person}{Kun Zhang}.} \bibinfo{year}{2022}\natexlab{}.
\newblock \showarticletitle{Calibration Matters: Tackling Maximization Bias in
  Large-scale Advertising Recommendation Systems}.
\newblock \bibinfo{journal}{\emph{arXiv preprint arXiv:2205.09809}}
  (\bibinfo{year}{2022}).
\newblock


\bibitem[Farias et~al\mbox{.}(2022)]%
        {farias2022markovian}
\bibfield{author}{\bibinfo{person}{Vivek Farias}, \bibinfo{person}{Andrew Li},
  \bibinfo{person}{Tianyi Peng}, {and} \bibinfo{person}{Andrew Zheng}.}
  \bibinfo{year}{2022}\natexlab{}.
\newblock \showarticletitle{Markovian interference in experiments}.
\newblock \bibinfo{journal}{\emph{Advances in Neural Information Processing
  Systems}}  \bibinfo{volume}{35} (\bibinfo{year}{2022}),
  \bibinfo{pages}{535--549}.
\newblock


\bibitem[Farias et~al\mbox{.}(2023)]%
        {farias2023correcting}
\bibfield{author}{\bibinfo{person}{Vivek Farias}, \bibinfo{person}{Hao Li},
  \bibinfo{person}{Tianyi Peng}, \bibinfo{person}{Xinyuyang Ren},
  \bibinfo{person}{Huawei Zhang}, {and} \bibinfo{person}{Andrew Zheng}.}
  \bibinfo{year}{2023}\natexlab{}.
\newblock \showarticletitle{Correcting for Interference in Experiments: A Case
  Study at Douyin}. In \bibinfo{booktitle}{\emph{Proceedings of the 17th ACM
  Conference on Recommender Systems}}. \bibinfo{pages}{455--466}.
\newblock


\bibitem[Fradkin(2015)]%
        {fradkin2015search}
\bibfield{author}{\bibinfo{person}{Andrey Fradkin}.}
  \bibinfo{year}{2015}\natexlab{}.
\newblock \showarticletitle{Search frictions and the design of online
  marketplaces}. In \bibinfo{booktitle}{\emph{The Third Conference on Auctions,
  Market Mechanisms and Their Applications}}.
\newblock


\bibitem[Glynn et~al\mbox{.}(2020)]%
        {glynn2020adaptive}
\bibfield{author}{\bibinfo{person}{Peter~W Glynn}, \bibinfo{person}{Ramesh
  Johari}, {and} \bibinfo{person}{Mohammad Rasouli}.}
  \bibinfo{year}{2020}\natexlab{}.
\newblock \showarticletitle{Adaptive experimental design with temporal
  interference: A maximum likelihood approach}.
\newblock \bibinfo{journal}{\emph{Advances in Neural Information Processing
  Systems}}  \bibinfo{volume}{33} (\bibinfo{year}{2020}),
  \bibinfo{pages}{15054--15064}.
\newblock


\bibitem[Goli et~al\mbox{.}(2023)]%
        {goli2023bias}
\bibfield{author}{\bibinfo{person}{Ali Goli}, \bibinfo{person}{Anja Lambrecht},
  {and} \bibinfo{person}{Hema Yoganarasimhan}.}
  \bibinfo{year}{2023}\natexlab{}.
\newblock \showarticletitle{A bias correction approach for interference in
  ranking experiments}.
\newblock \bibinfo{journal}{\emph{Marketing Science}} (\bibinfo{year}{2023}).
\newblock


\bibitem[Gui et~al\mbox{.}(2015)]%
        {gui2015network}
\bibfield{author}{\bibinfo{person}{Huan Gui}, \bibinfo{person}{Ya Xu},
  \bibinfo{person}{Anmol Bhasin}, {and} \bibinfo{person}{Jiawei Han}.}
  \bibinfo{year}{2015}\natexlab{}.
\newblock \showarticletitle{Network a/b testing: From sampling to estimation}.
  In \bibinfo{booktitle}{\emph{Proceedings of the 24th International Conference
  on World Wide Web}}. \bibinfo{pages}{399--409}.
\newblock


\bibitem[Guo et~al\mbox{.}(2023)]%
        {guo2023evaluating}
\bibfield{author}{\bibinfo{person}{Hongbo Guo}, \bibinfo{person}{Ruben Naeff},
  \bibinfo{person}{Alex Nikulkov}, {and} \bibinfo{person}{Zheqing Zhu}.}
  \bibinfo{year}{2023}\natexlab{}.
\newblock \showarticletitle{Evaluating Online Bandit Exploration In Large-Scale
  Recommender System}. In \bibinfo{booktitle}{\emph{KDD-23 Workshop on
  Multi-Armed Bandits and Reinforcement Learning: Advancing Decision Making in
  E-Commerce and Beyond}}.
\newblock


\bibitem[Ha-Thuc et~al\mbox{.}(2020)]%
        {ha2020counterfactual}
\bibfield{author}{\bibinfo{person}{Viet Ha-Thuc}, \bibinfo{person}{Avishek
  Dutta}, \bibinfo{person}{Ren Mao}, \bibinfo{person}{Matthew Wood}, {and}
  \bibinfo{person}{Yunli Liu}.} \bibinfo{year}{2020}\natexlab{}.
\newblock \showarticletitle{A counterfactual framework for seller-side a/b
  testing on marketplaces}. In \bibinfo{booktitle}{\emph{Proceedings of the
  43rd International ACM SIGIR Conference on Research and Development in
  Information Retrieval}}. \bibinfo{pages}{2288--2296}.
\newblock


\bibitem[Harshaw et~al\mbox{.}(2023)]%
        {harshaw2023design}
\bibfield{author}{\bibinfo{person}{Christopher Harshaw},
  \bibinfo{person}{Fredrik S{\"a}vje}, \bibinfo{person}{David Eisenstat},
  \bibinfo{person}{Vahab Mirrokni}, {and} \bibinfo{person}{Jean
  Pouget-Abadie}.} \bibinfo{year}{2023}\natexlab{}.
\newblock \showarticletitle{Design and analysis of bipartite experiments under
  a linear exposure-response model}.
\newblock \bibinfo{journal}{\emph{Electronic Journal of Statistics}}
  \bibinfo{volume}{17}, \bibinfo{number}{1} (\bibinfo{year}{2023}),
  \bibinfo{pages}{464--518}.
\newblock


\bibitem[Holtz and Aral(2020)]%
        {holtz2020limiting}
\bibfield{author}{\bibinfo{person}{David Holtz} {and} \bibinfo{person}{Sinan
  Aral}.} \bibinfo{year}{2020}\natexlab{}.
\newblock \showarticletitle{Limiting bias from test-control interference in
  online marketplace experiments}.
\newblock \bibinfo{journal}{\emph{arXiv preprint arXiv:2004.12162}}
  (\bibinfo{year}{2020}).
\newblock


\bibitem[Holtz et~al\mbox{.}(2023)]%
        {holtz2023study}
\bibfield{author}{\bibinfo{person}{David Holtz}, \bibinfo{person}{Jennifer
  Brennan}, {and} \bibinfo{person}{Jean Pouget-Abadie}.}
  \bibinfo{year}{2023}\natexlab{}.
\newblock \showarticletitle{A Study of" Symbiosis Bias" in A/B Tests of
  Recommendation Algorithms}.
\newblock \bibinfo{journal}{\emph{arXiv preprint arXiv:2309.07107}}
  (\bibinfo{year}{2023}).
\newblock


\bibitem[Holtz et~al\mbox{.}(2020)]%
        {holtz2020reducing}
\bibfield{author}{\bibinfo{person}{David Holtz}, \bibinfo{person}{Ruben Lobel},
  \bibinfo{person}{Inessa Liskovich}, {and} \bibinfo{person}{Sinan Aral}.}
  \bibinfo{year}{2020}\natexlab{}.
\newblock \showarticletitle{Reducing interference bias in online marketplace
  pricing experiments}.
\newblock \bibinfo{journal}{\emph{arXiv preprint arXiv:2004.12489}}
  (\bibinfo{year}{2020}).
\newblock


\bibitem[Hu and Wager(2022)]%
        {hu2022switchback}
\bibfield{author}{\bibinfo{person}{Yuchen Hu} {and} \bibinfo{person}{Stefan
  Wager}.} \bibinfo{year}{2022}\natexlab{}.
\newblock \showarticletitle{Switchback experiments under geometric mixing}.
\newblock \bibinfo{journal}{\emph{arXiv preprint arXiv:2209.00197}}
  (\bibinfo{year}{2022}).
\newblock


\bibitem[Hudgens and Halloran(2008)]%
        {hudgens2008toward}
\bibfield{author}{\bibinfo{person}{Michael~G Hudgens} {and}
  \bibinfo{person}{M~Elizabeth Halloran}.} \bibinfo{year}{2008}\natexlab{}.
\newblock \showarticletitle{Toward causal inference with interference}.
\newblock \bibinfo{journal}{\emph{J. Amer. Statist. Assoc.}}
  \bibinfo{volume}{103}, \bibinfo{number}{482} (\bibinfo{year}{2008}),
  \bibinfo{pages}{832--842}.
\newblock


\bibitem[Imbens and Rubin(2015)]%
        {imbens2015causal}
\bibfield{author}{\bibinfo{person}{Guido~W Imbens} {and}
  \bibinfo{person}{Donald~B Rubin}.} \bibinfo{year}{2015}\natexlab{}.
\newblock \bibinfo{booktitle}{\emph{Causal inference in statistics, social, and
  biomedical sciences}}.
\newblock \bibinfo{publisher}{Cambridge University Press}.
\newblock


\bibitem[Jadidinejad et~al\mbox{.}(2020)]%
        {jadidinejad2020using}
\bibfield{author}{\bibinfo{person}{Amir~H Jadidinejad}, \bibinfo{person}{Craig
  Macdonald}, {and} \bibinfo{person}{Iadh Ounis}.}
  \bibinfo{year}{2020}\natexlab{}.
\newblock \showarticletitle{Using exploration to alleviate closed loop effects
  in recommender systems}. In \bibinfo{booktitle}{\emph{Proceedings of the 43rd
  International ACM SIGIR Conference on Research and Development in Information
  Retrieval}}. \bibinfo{pages}{2025--2028}.
\newblock


\bibitem[Johari et~al\mbox{.}(2022)]%
        {johari2022experimental}
\bibfield{author}{\bibinfo{person}{Ramesh Johari}, \bibinfo{person}{Hannah Li},
  \bibinfo{person}{Inessa Liskovich}, {and} \bibinfo{person}{Gabriel~Y
  Weintraub}.} \bibinfo{year}{2022}\natexlab{}.
\newblock \showarticletitle{Experimental design in two-sided platforms: An
  analysis of bias}.
\newblock \bibinfo{journal}{\emph{Management Science}} (\bibinfo{year}{2022}).
\newblock


\bibitem[Karlsson(2020)]%
        {karlsson2020feedback}
\bibfield{author}{\bibinfo{person}{Niklas Karlsson}.}
  \bibinfo{year}{2020}\natexlab{}.
\newblock \showarticletitle{Feedback control in programmatic advertising: The
  frontier of optimization in real-time bidding}.
\newblock \bibinfo{journal}{\emph{IEEE Control Systems Magazine}}
  \bibinfo{volume}{40}, \bibinfo{number}{5} (\bibinfo{year}{2020}),
  \bibinfo{pages}{40--77}.
\newblock


\bibitem[Khenissi(2022)]%
        {khenissi2022modeling}
\bibfield{author}{\bibinfo{person}{Sami Khenissi}.}
  \bibinfo{year}{2022}\natexlab{}.
\newblock \showarticletitle{Modeling and debiasing feedback loops in
  collaborative filtering recommender systems.}
\newblock  (\bibinfo{year}{2022}).
\newblock


\bibitem[Krauth et~al\mbox{.}(2022)]%
        {krauth2022breaking}
\bibfield{author}{\bibinfo{person}{Karl Krauth}, \bibinfo{person}{Yixin Wang},
  {and} \bibinfo{person}{Michael~I Jordan}.} \bibinfo{year}{2022}\natexlab{}.
\newblock \showarticletitle{Breaking feedback loops in recommender systems with
  causal inference}.
\newblock \bibinfo{journal}{\emph{arXiv preprint arXiv:2207.01616}}
  (\bibinfo{year}{2022}).
\newblock


\bibitem[Li et~al\mbox{.}(2023)]%
        {li2023experimenting}
\bibfield{author}{\bibinfo{person}{Shuangning Li}, \bibinfo{person}{Ramesh
  Johari}, \bibinfo{person}{Stefan Wager}, {and} \bibinfo{person}{Kuang Xu}.}
  \bibinfo{year}{2023}\natexlab{}.
\newblock \showarticletitle{Experimenting under Stochastic Congestion}.
\newblock \bibinfo{journal}{\emph{arXiv preprint arXiv:2302.12093}}
  (\bibinfo{year}{2023}).
\newblock


\bibitem[Li and Wager(2022)]%
        {li2022random}
\bibfield{author}{\bibinfo{person}{Shuangning Li} {and} \bibinfo{person}{Stefan
  Wager}.} \bibinfo{year}{2022}\natexlab{}.
\newblock \showarticletitle{Random graph asymptotics for treatment effect
  estimation under network interference}.
\newblock \bibinfo{journal}{\emph{The Annals of Statistics}}
  \bibinfo{volume}{50}, \bibinfo{number}{4} (\bibinfo{year}{2022}),
  \bibinfo{pages}{2334--2358}.
\newblock


\bibitem[Liu et~al\mbox{.}(2021)]%
        {liu2021trustworthy}
\bibfield{author}{\bibinfo{person}{Min Liu}, \bibinfo{person}{Jialiang Mao},
  {and} \bibinfo{person}{Kang Kang}.} \bibinfo{year}{2021}\natexlab{}.
\newblock \showarticletitle{Trustworthy and powerful online marketplace
  experimentation with budget-split design}. In
  \bibinfo{booktitle}{\emph{Proceedings of the 27th ACM SIGKDD Conference on
  Knowledge Discovery \& Data Mining}}. \bibinfo{pages}{3319--3329}.
\newblock


\bibitem[Mansoury et~al\mbox{.}(2020)]%
        {mansoury2020feedback}
\bibfield{author}{\bibinfo{person}{Masoud Mansoury}, \bibinfo{person}{Himan
  Abdollahpouri}, \bibinfo{person}{Mykola Pechenizkiy},
  \bibinfo{person}{Bamshad Mobasher}, {and} \bibinfo{person}{Robin Burke}.}
  \bibinfo{year}{2020}\natexlab{}.
\newblock \showarticletitle{Feedback loop and bias amplification in recommender
  systems}. In \bibinfo{booktitle}{\emph{Proceedings of the 29th ACM
  international conference on information \& knowledge management}}.
  \bibinfo{pages}{2145--2148}.
\newblock


\bibitem[Musgrave et~al\mbox{.}(2023)]%
        {musgrave2023measuring}
\bibfield{author}{\bibinfo{person}{Paul Musgrave}, \bibinfo{person}{Cuize Han},
  {and} \bibinfo{person}{Parth Gupta}.} \bibinfo{year}{2023}\natexlab{}.
\newblock \showarticletitle{Measuring service-level learning effects in search
  via query-randomized experiments}.
\newblock  (\bibinfo{year}{2023}).
\newblock


\bibitem[Nandy et~al\mbox{.}(2021)]%
        {nandy2021b}
\bibfield{author}{\bibinfo{person}{Preetam Nandy}, \bibinfo{person}{Divya
  Venugopalan}, \bibinfo{person}{Chun Lo}, {and} \bibinfo{person}{Shaunak
  Chatterjee}.} \bibinfo{year}{2021}\natexlab{}.
\newblock \showarticletitle{A/b testing for recommender systems in a two-sided
  marketplace}.
\newblock \bibinfo{journal}{\emph{Advances in Neural Information Processing
  Systems}}  \bibinfo{volume}{34} (\bibinfo{year}{2021}),
  \bibinfo{pages}{6466--6477}.
\newblock


\bibitem[Ni et~al\mbox{.}(2023)]%
        {ni2023design}
\bibfield{author}{\bibinfo{person}{Tu Ni}, \bibinfo{person}{Iavor Bojinov},
  {and} \bibinfo{person}{Jinglong Zhao}.} \bibinfo{year}{2023}\natexlab{}.
\newblock \showarticletitle{Design of Panel Experiments with Spatial and
  Temporal Interference}.
\newblock \bibinfo{journal}{\emph{Available at SSRN 4466598}}
  (\bibinfo{year}{2023}).
\newblock


\bibitem[Pan et~al\mbox{.}(2021)]%
        {pan2021correcting}
\bibfield{author}{\bibinfo{person}{Weishen Pan}, \bibinfo{person}{Sen Cui},
  \bibinfo{person}{Hongyi Wen}, \bibinfo{person}{Kun Chen},
  \bibinfo{person}{Changshui Zhang}, {and} \bibinfo{person}{Fei Wang}.}
  \bibinfo{year}{2021}\natexlab{}.
\newblock \showarticletitle{Correcting the user feedback-loop bias for
  recommendation systems}.
\newblock \bibinfo{journal}{\emph{arXiv preprint arXiv:2109.06037}}
  (\bibinfo{year}{2021}).
\newblock


\bibitem[Pearl(2000)]%
        {pearl2000models}
\bibfield{author}{\bibinfo{person}{Judea Pearl}.}
  \bibinfo{year}{2000}\natexlab{}.
\newblock \showarticletitle{Causality: Models, reasoning and inference}.
\newblock \bibinfo{journal}{\emph{Cambridge, UK: CambridgeUniversityPress}}
  \bibinfo{volume}{19}, \bibinfo{number}{2} (\bibinfo{year}{2000}),
  \bibinfo{pages}{3}.
\newblock


\bibitem[Pouget-Abadie et~al\mbox{.}(2019)]%
        {pouget2019variance}
\bibfield{author}{\bibinfo{person}{Jean Pouget-Abadie}, \bibinfo{person}{Kevin
  Aydin}, \bibinfo{person}{Warren Schudy}, \bibinfo{person}{Kay Brodersen},
  {and} \bibinfo{person}{Vahab Mirrokni}.} \bibinfo{year}{2019}\natexlab{}.
\newblock \showarticletitle{Variance reduction in bipartite experiments through
  correlation clustering}.
\newblock \bibinfo{journal}{\emph{Advances in Neural Information Processing
  Systems}}  \bibinfo{volume}{32} (\bibinfo{year}{2019}).
\newblock


\bibitem[Si(2023)]%
        {si2023tackling}
\bibfield{author}{\bibinfo{person}{Nian Si}.} \bibinfo{year}{2023}\natexlab{}.
\newblock \showarticletitle{Tackling Interference Induced by Data Training
  Loops in A/B Tests: A Weighted Training Approach}.
\newblock \bibinfo{journal}{\emph{arXiv preprint arXiv:2310.17496}}
  (\bibinfo{year}{2023}).
\newblock


\bibitem[Si et~al\mbox{.}(2022)]%
        {si2022optimal}
\bibfield{author}{\bibinfo{person}{Nian Si}, \bibinfo{person}{San Gultekin},
  \bibinfo{person}{Jose Blanchet}, {and} \bibinfo{person}{Aaron Flores}.}
  \bibinfo{year}{2022}\natexlab{}.
\newblock \showarticletitle{Optimal Bidding and Experimentation for Multi-Layer
  Auctions in Online Advertising}.
\newblock \bibinfo{journal}{\emph{Available at SSRN}} (\bibinfo{year}{2022}).
\newblock
\urldef\tempurl%
\url{https://papers.ssrn.com/sol3/papers.cfm?abstract_id=4358914}
\showURL{%
\tempurl}


\bibitem[Ugander et~al\mbox{.}(2013)]%
        {ugander2013graph}
\bibfield{author}{\bibinfo{person}{Johan Ugander}, \bibinfo{person}{Brian
  Karrer}, \bibinfo{person}{Lars Backstrom}, {and} \bibinfo{person}{Jon
  Kleinberg}.} \bibinfo{year}{2013}\natexlab{}.
\newblock \showarticletitle{Graph cluster randomization: Network exposure to
  multiple universes}. In \bibinfo{booktitle}{\emph{Proceedings of the 19th ACM
  SIGKDD international conference on Knowledge discovery and data mining}}.
  \bibinfo{pages}{329--337}.
\newblock


\bibitem[Ugander and Yin(2023)]%
        {ugander2023randomized}
\bibfield{author}{\bibinfo{person}{Johan Ugander} {and} \bibinfo{person}{Hao
  Yin}.} \bibinfo{year}{2023}\natexlab{}.
\newblock \showarticletitle{Randomized graph cluster randomization}.
\newblock \bibinfo{journal}{\emph{Journal of Causal Inference}}
  \bibinfo{volume}{11}, \bibinfo{number}{1} (\bibinfo{year}{2023}),
  \bibinfo{pages}{20220014}.
\newblock


\bibitem[Wang and Ba(2023)]%
        {Wang-ba-Producer2023}
\bibfield{author}{\bibinfo{person}{Yan Wang} {and} \bibinfo{person}{Shan Ba}.}
  \bibinfo{year}{2023}\natexlab{}.
\newblock \showarticletitle{Producer-Side Experiments Based on Counterfactual
  Interleaving Designs for Online Recommender Systems}.
\newblock \bibinfo{journal}{\emph{arXiv preprint arXiv:2310.16294}}
  (\bibinfo{year}{2023}).
\newblock


\bibitem[Xiong et~al\mbox{.}(2019)]%
        {xiong2019optimal}
\bibfield{author}{\bibinfo{person}{Ruoxuan Xiong}, \bibinfo{person}{Susan
  Athey}, \bibinfo{person}{Mohsen Bayati}, {and} \bibinfo{person}{Guido
  Imbens}.} \bibinfo{year}{2019}\natexlab{}.
\newblock \showarticletitle{Optimal experimental design for staggered
  rollouts}.
\newblock \bibinfo{journal}{\emph{arXiv preprint arXiv:1911.03764}}
  (\bibinfo{year}{2019}).
\newblock


\bibitem[Xiong et~al\mbox{.}(2023b)]%
        {xiong2023data}
\bibfield{author}{\bibinfo{person}{Ruoxuan Xiong}, \bibinfo{person}{Alex Chin},
  \bibinfo{person}{Mohsen Bayati}, {and} \bibinfo{person}{Sean Taylor}.}
  \bibinfo{year}{2023}\natexlab{b}.
\newblock \showarticletitle{Data-Driven Switchback Design}.
\newblock \bibinfo{journal}{\emph{preprint}} (\bibinfo{year}{2023}).
\newblock
\urldef\tempurl%
\url{https://www.ruoxuanxiong.com/data-driven-switchback-design.pdf}
\showURL{%
\tempurl}


\bibitem[Xiong et~al\mbox{.}(2023a)]%
        {xiong2023bias}
\bibfield{author}{\bibinfo{person}{Ruoxuan Xiong}, \bibinfo{person}{Alex Chin},
  {and} \bibinfo{person}{Sean Taylor}.} \bibinfo{year}{2023}\natexlab{a}.
\newblock \showarticletitle{Bias-variance tradeoffs for designing simultaneous
  temporal experiments}. In \bibinfo{booktitle}{\emph{The KDD'23 Workshop on
  Causal Discovery, Prediction and Decision}}. PMLR, \bibinfo{pages}{115--131}.
\newblock


\bibitem[Yang et~al\mbox{.}(2023)]%
        {yang2023rectifying}
\bibfield{author}{\bibinfo{person}{Mengyue Yang}, \bibinfo{person}{Jun Wang},
  {and} \bibinfo{person}{Jean-Francois Ton}.} \bibinfo{year}{2023}\natexlab{}.
\newblock \showarticletitle{Rectifying unfairness in recommendation feedback
  loop}. In \bibinfo{booktitle}{\emph{Proceedings of the 46th international ACM
  SIGIR Conference on Research and Development in Information Retrieval}}.
  \bibinfo{pages}{28--37}.
\newblock


\bibitem[Ye et~al\mbox{.}(2023)]%
        {ye2023cold}
\bibfield{author}{\bibinfo{person}{Zikun Ye}, \bibinfo{person}{Dennis~J Zhang},
  \bibinfo{person}{Heng Zhang}, \bibinfo{person}{Renyu Zhang},
  \bibinfo{person}{Xin Chen}, {and} \bibinfo{person}{Zhiwei Xu}.}
  \bibinfo{year}{2023}\natexlab{}.
\newblock \showarticletitle{Cold start to improve market thickness on online
  advertising platforms: Data-driven algorithms and field experiments}.
\newblock \bibinfo{journal}{\emph{Management Science}} \bibinfo{volume}{69},
  \bibinfo{number}{7} (\bibinfo{year}{2023}), \bibinfo{pages}{3838--3860}.
\newblock


\bibitem[Yu et~al\mbox{.}(2022)]%
        {yu2022estimating}
\bibfield{author}{\bibinfo{person}{Christina~Lee Yu},
  \bibinfo{person}{Edoardo~M Airoldi}, \bibinfo{person}{Christian Borgs}, {and}
  \bibinfo{person}{Jennifer~T Chayes}.} \bibinfo{year}{2022}\natexlab{}.
\newblock \showarticletitle{Estimating the total treatment effect in randomized
  experiments with unknown network structure}.
\newblock \bibinfo{journal}{\emph{Proceedings of the National Academy of
  Sciences}} \bibinfo{volume}{119}, \bibinfo{number}{44}
  (\bibinfo{year}{2022}), \bibinfo{pages}{e2208975119}.
\newblock


\end{thebibliography}
\appendix
\newpage
\section{Proofs}
\label{append:proof}
 \begin{proof} [Proof of Proposition \ref{prop:NEwithoutcase1}]
     Recall in this case without feedback loops, we have $\Phi^T(s,\hat{e})=\Phi^C(s,\hat{e})=\hat{e}$. Therefore, we have
     $$
     r_i^{GT}(t)=r_i^{T}(t)=r_i^{C}(t)=r_i^{GC}(t),
     $$
     for $i=1,2,\ldots, N$ and $t\in[0,H]$, because of 
     $$\hat{e}_i^{GT}(t)=\hat{e}_i^T(t)=\hat{e}_i^C(t)=\hat{e}_i^{GC}(t).$$
     It further yields
     $$
     I_i^{GT}(t)=I_i^{T}(t)=I_i^{C}(t)=I_i^{GC}(t),
     $$
     which means
     \begin{align*}
         &\mathbb{E}\left[ \left(\frac{1}{p}\int_{0}^{H} O_{i}^{T}(t)dt\right)\mathbb{I}\{i \in \mathcal{T}\} \right]= \int_{0}^{H} O_{i}^{GT}(t)dt \text{ and} \\
         &\mathbb{E}\left[ \left(\frac{1}{1-p}\int_{0}^{H} O_{i}^{T}(t)dt\right)\mathbb{I}\{i \in \mathcal{C}\} \right]= \int_{0}^{H} O_{i}^{GC}(t)dt,
     \end{align*}
     for $i=1,2,\ldots, N$. 
The desired results then follow.
 \end{proof}
\begin{proof}[Proof of Proposition \ref{prop:CEwithoutunbiased}]
   From (\ref{CEwithoutr}), we know 
   $$
     r_i^{GT}(t)=r_i^{T}(t)\text{ and }r_i^{C}(t)=r_i^{GC}(t),
     $$
     for $i=1,2,\ldots, N$ and $t\in[0,H]$, given that 
     $$\hat{e}_i^{GT}(t)=\hat{e}_i^T(t)\text{ and } \hat{e}_i^C(t)=\hat{e}_i^{GC}(t).$$
     Then, by comparing (\ref{CEwithoutIT}), (\ref{CEwithoutfT}) with (\ref{GT:I}) and  (\ref{CEwithoutIC}), (\ref{CEwithoutfC}) with (\ref{GC:I}), we observe
     $$
     I_i^{GT}(t)=I_i^{T}(t) \text{ and } I_i^{C}(t)=I_i^{GC}(t).
     $$
     The rest of proof follows the proof of Proposition \ref{prop:NEwithoutcase1}.
\end{proof}
\begin{proof}[Proof of Theorem \ref{thm:NE:with}]
    We prove the following claim    
\[
\int_{0}^{t}O_{i}^{E}(s)ds\leq \int_{0}^{t}O_{i}^{GT}(s)ds,\text{ }
\]%
for $i=1,2,\ldots ,N$ and $t\in \lbrack 0,H]$ by contradiction. Suppose that
there exists some $i=$ $1,2,\ldots ,N$ and $t\in \lbrack 0,H]$ such that 
\[
\int_{0}^{t}O_{i}^{E}(s)ds>\int_{0}^{t}O_{i}^{GT}(s)ds.
\]%
We first identify when this inequality occurs for the first time:  
\[
\tau =\inf \left\{ t:\text{ there exists }i\text{ such that }%
\int_{0}^{t}O_{i}^{E}(s)ds>\int_{0}^{t}O_{i}^{GT}(s)ds\right\} ,
\]
and let $\mathcal{I}^{\ast }$ contain the corresponding sellers. That is, for $i^*\in\mathcal{I}^*$, we have
\begin{equation}
\label{ineq:tau_i_star}
\tau =\inf \left\{ t: 
\int_{0}^{t}O_{i^*}^{E}(s)ds>\int_{0}^{t}O_{i^*}^{GT}(s)ds\right\} ,  
\end{equation}%
and for $j\notin\mathcal{I}^*$, there exists $t_0>\tau$ such that for $t\in [\tau,t_0]$,
$$\int_{0}^{t}O_{j}^{E}(s)ds\leq\int_{0}^{t}O_{j}^{GT}(s)ds.$$
By continuity, we have 
\begin{equation}
\int_{0}^{\tau }O_{i^{\ast }}^{E}(s)ds=\int_{0}^{\tau }O_{i^{\ast
}}^{GT}(s)ds,\text{ for }i^{\ast }\in \mathcal{I}^{\ast },
\label{eq:tau:istar}
\end{equation}%
which implies 
$
S_{i^*}^{E}(\tau) =S_{i^*}^{GT}(\tau),
$
\text{ for }$i^{\ast }\in \mathcal{I}^{\ast }.$

Due to the inverse monotonocity in Assumption \ref{assump:monotonicity}.(2),
we have 
\begin{eqnarray}
r_{i^{\ast }}^{E}(\tau ) &= &r_{i^{\ast }}^{GT}(\tau)\text{ for }i^{\ast
}\in \mathcal{I}^{\ast },\text{ }  \label{eq:r:proof:i} \\
r_{j}^{E}(\tau ) &\geq &r_{j}^{GT}(\tau)\text{ for }j\notin \mathcal{I}^{\ast
}.  \label{eq:r:proof:j}
\end{eqnarray}%
Then, by Assumption \ref{assump:monotonicity}.(1), we have%
\begin{eqnarray*}
&&f_{i^{\ast }}^{NE}(\tau ) \\
&=&f_{i^{\ast }}\left( \left\{ r_{j}^{E}(\tau ),j=1,2,\ldots ,N\right\} ,%
\underline{R}(\tau )\right)  \\
&\leq &f_{i^{\ast }}\left( \left\{ r_{j}^{GT}(\tau ),j=1,2,\ldots ,N\right\}
,\underline{R}(\tau )\right)  \\
&=&f_{i^{\ast }}^{GT}(\tau ).
\end{eqnarray*}

We then consider three different cases.

\textbf{Case 1: }There exists some $i^{\ast }\in \mathcal{I}^{\ast }$ such
that 
\[
f_{i^{\ast }}^{GT}(\tau )>0.
\]
By Assumption \ref{assump:technical}, we have there exists $t_{1}>\tau ,$
such that 
\[
f_{i^{\ast }}^{GT}(t)\geq 0,\text{ for }t\in \lbrack \tau ,t_{1}]\ ,
\]%
which means $I_{i^{\ast }}^{GT}(t)=1,$ for $t\in \lbrack \tau ,t_{1}].$
Then, we have for $t \in [\tau,t_1]$%
\begin{eqnarray*}
\int_{0}^{t}O_{i^{\ast }}^{GT}(s)ds &=&S_{i^{\ast }}^{GT}(\tau
)+\int_{\tau }^{t}e_{i^{\ast }}^{GT}(s)ds \\
&\geq &S_{i^{\ast }}^{E}(\tau )+\int_{\tau }^{t}I_{i^{\ast
}}^{E}(s)e_{i^{\ast }}^{E}(s)ds \\
&=&\int_{0}^{t}O_{i^{\ast }}^{E}(s)ds,
\end{eqnarray*}%
which contradicts to the definition of $\tau$, as shown in (\ref{ineq:tau_i_star}).

\textbf{Case 2: }There exists some $i^{\ast }\in \mathcal{I}^{\ast }$ such
that 
\[
0>f_{i^{\ast }}^{NE}(\tau ).
\]%
By Assumption \ref{assump:technical}, we have there exists $t_{2}>\tau ,$
such that 
\[
f_{i^{\ast }}^{NE}(t )<0,\text{ for }t\in \lbrack \tau ,t_{2}]\ ,
\]%
which means $I_{i^{\ast }}^{E}(t)=0,$ for $t\in \lbrack \tau ,t_{2}].$ Then,
we have for $t\in[t,t_2]$%
\begin{eqnarray*}
\int_{0}^{t}O_{i^{\ast }}^{E}(s)ds &=&S_{i^{\ast }}^{E}(\tau ) \\
&\leq &S_{i^{\ast }}^{GT}(\tau )+\int_{\tau }^{t}I_{i^{\ast
}}^{GT}(t)e_{i^{\ast }}^{GT}(s)ds \\
&=&\int_{0}^{t}O_{i^{\ast }}^{GT}(s)ds,
\end{eqnarray*}%
which contradicts to the definition of $\tau$, as shown in (\ref{ineq:tau_i_star}).

\textbf{Case 3: }For all $i^{\ast }\in \mathcal{I}^{\ast }$, we have%
\[
f_{i^{\ast }}^{GT}(\tau )=f_{i^{\ast }}^{NE}(\tau )=0.
\]
If $f_{i^{\ast }}^{GT}(t+)>0$, it reduces to the case 1. Similarly, if $f_{i^{\ast }}^{NE}(t+)<0$, it reduces to the case 2.
Then, there
must exists some $t_{3}\in (\tau ,t_{0})$ such that 
\begin{equation}
f_{i^{\ast }}^{GT}(t)<0\text{ and }f_{i^{\ast }}^{NE}(t)>0,
\label{eq:case3:f}
\end{equation}%
for $t\in (\tau ,t_{3}]$ and $i^{\ast }\in \mathcal{I}^{\ast }.$ 
It further yields  
\begin{eqnarray}
S_{i^{\ast }}^{E}(t) &\geq &S_{i^{\ast }}^{GT}(t),\text{ for }i^{\ast }\in 
\mathcal{I}^{\ast },\text{ }t\in [\tau ,t_{3}]\text{ }  \label{eq:prove:S} \\
S_{j}^{E}(t) &\leq &S_{j}^{GT}(t),\text{ for }j\notin \mathcal{I}^{\ast },%
\text{ }t\in [\tau ,t_{3}].
\end{eqnarray}%
Due to the inverse monotonocity in Assumption \ref{assump:monotonicity}.(2),
we have 
\begin{eqnarray}
r_{i^{\ast }}^{E}(t ) &\leq &r_{i^{\ast }}^{GT}(t)\text{ for }i^{\ast
}\in \mathcal{I}^{\ast },\text{ }t\in [\tau ,t_{3}]  \label{eq:r:proof:i} \\
r_{j}^{E}(t ) &\geq &r_{j}^{GT}(t)\text{ for }j\notin \mathcal{I}^{\ast
},t\in [\tau ,t_{3}].  \label{eq:r:proof:j}
\end{eqnarray}%

Then, by
Assumption \ref{assump:monotonicity}.(1), we have for $t\in (\tau ,t_{0}],$ 
\begin{eqnarray*}
&&\sum_{i^{\ast }\in \mathcal{I}^{\ast }}f_{i^{\ast }}^{NE}(t) \\
&=&\sum_{i^{\ast }\in \mathcal{I}^{\ast }}f_{i^{\ast }}\left( \left\{
r_{j}^{E}(t),j=1,2,\ldots ,N\right\} ,\underline{R}(t)\right)  \\
&\leq &\sum_{i^{\ast }\in \mathcal{I}^{\ast }}f_{i^{\ast }}\left( \left\{
r_{i^{\ast }}^{E}(t),i^{\ast }\in \mathcal{I}^{\ast }\right\} \cup \left\{
r_{j}^{GT}(t),j\notin \mathcal{I}^{\ast }\right\} ,\underline{R}(t)\right) 
\\
&\leq &\sum_{i^{\ast }\in \mathcal{I}^{\ast }}f_{i^{\ast }}\left( \left\{
r_{i^{\ast }}^{GT}(t),i^{\ast }\in \mathcal{I}^{\ast }\right\} \cup \left\{
r_{j}^{GT}(t),j\notin \mathcal{I}^{\ast }\right\} ,\underline{R}(t)\right) 
\\
&=&\sum_{i^{\ast }\in \mathcal{I}^{\ast }}f_{i^{\ast }}^{GT}(t),
\end{eqnarray*}%
which contradicts to the inequality (\ref{eq:case3:f}).

Similarly, we can show 
\[
\int_{0}^{t}O_{i}^{E}(s)ds\geq \int_{0}^{t}O_{i}^{GC}(s)ds,\text{ } 
\]%
for $i=1,2,\ldots ,N$ and $t\in \lbrack 0,H].$

\end{proof}

\begin{proof}[Proof of Theorem \ref{thm:CE:with2}]
    We proceed in the same manner as the proof of Theorem  \ref{thm:NE:with}.

We prove the following claim    
\[
\int_{0}^{t}O_{i}^{E}(s)ds\leq \int_{0}^{t}O_{i}^{GT}(s)ds,\text{ }
\]%
for $i=1,2,\ldots ,N$ and $t\in \lbrack 0,H]$ by contradiction. Suppose that
there exists some $i=$ $1,2,\ldots ,N$ and $t\in \lbrack 0,H]$ such that 
\[
\int_{0}^{t}O_{i}^{E}(s)ds>\int_{0}^{t}O_{i}^{GT}(s)ds.
\]%
We first identify when this inequality occurs for the first time:  
\[
\tau =\inf \left\{ t:\text{ there exists some }i\text{ s.t. }%
\int_{0}^{t}O_{i}^{E}(s)ds>\int_{0}^{t}O_{i}^{GT}(s)ds\right\} ,
\]%
and let $\mathcal{I}^{\ast }$ contain the corresponding sellers. By
continuity, we have 
\begin{equation}
\int_{0}^{\tau }O_{i^{\ast }}^{E}(s)ds=\int_{0}^{\tau }O_{i^{\ast
}}^{GT}(s)ds,\text{ for }i^{\ast }\in \mathcal{I}^{\ast }.
\label{eq:tau:istar2}
\end{equation}%
Similar to Proof of Theorem \ref{thm:NE:with}, we have%
\begin{eqnarray*}
&&f_{i^{\ast }}^{CE,T}(\tau ) \\
&=&f_{i^{\ast }}\left( \left\{ r_{j}^{T}(\tau ),j=1,2,\ldots ,N\right\} ,%
\underline{R}(\tau )\right)  \\
&\leq &f_{i^{\ast }}\left( \left\{ r_{j}^{GT}(\tau ),j=1,2,\ldots ,N\right\}
,\underline{R}(\tau )\right)  \\
&=&f_{i^{\ast }}^{GT}(\tau ).
\end{eqnarray*}

Cases 1 and 2 are essentially the same as in the proof of Theorem \ref{thm:NE:with}.

\textbf{Case 3: }For all $i^{\ast }\in \mathcal{I}^{\ast }$, we have%
\begin{eqnarray*}
f_{i^{\ast }}^{GT}(\tau ) &=&0, \\
f_{i^{\ast }}^{CE,T}(t) &=&0\text{ if }i^{\ast }\in \mathcal{T}\text{ and }%
f_{i^{\ast }}^{CE,C}(t)=0\text{ if }i^{\ast }\in \mathcal{C}\text{.}
\end{eqnarray*}

There must
exists some $t_{3}\in (\tau ,t_{0})$ such that 
\begin{eqnarray}
f_{i^{\ast }}^{GT}(t) &<&0\text{,}  \label{eq:case3:f:GT} \\
f_{i^{\ast }}^{CE,T}(t) &>&0\text{ if }i^{\ast }\in \mathcal{T}\text{ and }%
f_{i^{\ast }}^{CE,C}(t)>0\text{ if }i^{\ast }\in \mathcal{C}\text{, }
\label{eq:case3:f:CE}
\end{eqnarray}%
for $t\in (\tau ,t_{3}]$ and $i^{\ast }\in \mathcal{I}^{\ast }.$

It further yields  
\begin{eqnarray}
S_{i^{\ast }}^{E}(t) &\geq &S_{i^{\ast }}^{GT}(t),\text{ for }i^{\ast }\in 
\mathcal{I}^{\ast },\text{ }t\in [\tau ,t_{3}]\text{ }  \label{eq:prove:S} \\
S_{j}^{E}(t) &\leq &S_{j}^{GT}(t),\text{ for }j\notin \mathcal{I}^{\ast },%
\text{ }t\in [\tau ,t_{3}].
\end{eqnarray}%
Due to the inverse monotonocity in Assumption \ref{assump:monotonicity}.(2),
we have 
\begin{eqnarray}
r_{i^{\ast }}^{T}(t ) &\leq &r_{i^{\ast }}^{GT}(t)\text{ for }i^{\ast
}\in \mathcal{I}^{\ast },\text{ }t\in [\tau ,t_{3}]  \label{eq:r:proof:i2} \\
r_{j}^{T}(t ) &\geq &r_{j}^{GT}(t)\text{ for }j\notin \mathcal{I}^{\ast
},t\in [\tau ,t_{3}].  \label{eq:r:proof:j2}
\end{eqnarray}%

First, by Assumption \ref{assump:uniform:psi}, we have  
\begin{eqnarray*}
&&f_{i^{\ast }}^{CE,C}(t) \\
&=&f_{i^{\ast }}\left( \left\{ r_{j}^{C}(t),j=1,2,\ldots ,N\right\} ,%
\underline{R}(t)\right)  \\
&\leq &f_{i^{\ast }}\left( \left\{ \alpha r_{j}^{C}(t),j=1,2,\ldots
,N\right\} ,\underline{R}(t)\right)  \\
&=&f_{i^{\ast }}\left( \left\{ r_{j}^{T}(t),j=1,2,\ldots ,N\right\} ,%
\underline{R}(t)\right)  \\
&\leq &f_{i^{\ast }}^{CE,T}(t).
\end{eqnarray*}%
Then, inequalities (\ref{eq:case3:f:GT}) and (\ref{eq:case3:f:CE}) imply 
\begin{equation}
f_{i^{\ast }}^{GT}(t)<0\text{, }f_{i^{\ast }}^{CE,T}(t)>0\text{, }
\label{eq:case3:f:both}
\end{equation}%
for $t\in (\tau ,t_{3}]$ and $i^{\ast }\in \mathcal{I}^{\ast }.$

Then, by Assumption \ref{assump:monotonicity}.(1) and the inequalities (\ref%
{eq:r:proof:i2}) and (\ref{eq:r:proof:j2}), we have for $t\in (\tau ,t_{0}],$
\begin{eqnarray*}
&&\sum_{i^{\ast }\in \mathcal{I}^{\ast }}f_{i^{\ast }}^{CE,T}(t) \\
&=&\sum_{i^{\ast }\in \mathcal{I}^{\ast }}f_{i^{\ast }}\left( \left\{
r_{j}^{T}(t),j=1,2,\ldots ,N\right\} ,\underline{R}(t)\right)  \\
&\leq &\sum_{i^{\ast }\in \mathcal{I}^{\ast }}f_{i^{\ast }}\left( \left\{
r_{i^{\ast }}^{T}(t),i^{\ast }\in \mathcal{I}^{\ast }\right\} \cup \left\{
r_{j}^{GT}(t),j\notin \mathcal{I}^{\ast }\right\} ,\underline{R}(t)\right) 
\\
&\leq &\sum_{i^{\ast }\in \mathcal{I}^{\ast }}f_{i^{\ast }}\left( \left\{
r_{i^{\ast }}^{GT}(t),i^{\ast }\in \mathcal{I}^{\ast }\right\} \cup \left\{
r_{j}^{GT}(t),j\notin \mathcal{I}^{\ast }\right\} ,\underline{R}(t)\right) 
\\
&=&\sum_{i^{\ast }\in \mathcal{I}^{\ast }}f_{i^{\ast }}^{GT}(t),
\end{eqnarray*}%
which contradicts to the inequalities (\ref{eq:case3:f:both}).
\end{proof}

\end{document}